\newlength\myindent
\newcommand\bindent{%
  \begingroup
  \setlength{\itemindent}{\myindent}
  \addtolength{\algorithmicindent}{\myindent}
}
\newcommand\eindent{\endgroup}
\theoremstyle{plain}
\newtheorem{theorem}{Theorem}[section]
\newtheorem{corollary}[theorem]{Corollary}
\newtheorem{lemma}[theorem]{Lemma}
\newtheorem{definition}[theorem]{Definition}
\newcommand{\eps}{\varepsilon}
\newcommand{\con}{\mathsf{congestion}}
\newcommand{\dil}{\mathsf{dilation}}
\newcommand{\E}{\mathbb{E}}
\newcommand{\calT}{\mathcal{T}}
\newcommand{\calD}{\mathcal{D}}
\newcommand{\calP}{\mathcal{P}}
\newcommand{\dist}{d}
\newcommand{\at}[1]{^{(#1)}}
\newcommand{\1}[1]{\mathbb{I}[#1]}
\newcommand{\opt}{\mathrm{opt}}
\renewcommand{\cong}{\mathrm{cong}}
\newcommand{\poly}{\mathrm{poly}}
\newcommand{\hop}{\mathrm{hop}}
\newcommand{\set}[1]{{\{#1\}}}
\DeclareMathOperator{\supp}{supp}
\DeclareMathOperator{\flow}{flow}
\newcommand{\rev}[1]{#1}
\begin{document}
\title{Hop-Constrained Oblivious Routing\footnote{Supported in part by Swiss National Foundation (project grant 200021-184735), NSF grants CCF-1527110, CCF-1618280, CCF-1814603, CCF-1910588, NSF CAREER award CCF-1750808, a Sloan Research Fellowship, and funding from the European Research Council (ERC) under the European Union's Horizon 2020 research and innovation program (ERC grant agreement No. 853109 and grant agreement No. 949272).}}

\author{
  Mohsen Ghaffari\\
  \small MIT \\
  \small ghaffari@mit.edu
  \and	
  Bernhard Haeupler \\
  \small ETH Zurich \& Carnegie Mellon University\\
  \small bernhard.haeupler@inf.ethz.ch
  \and
  Goran Zuzic \\
  \small ETH Zurich \\
  \small goran.zuzic@inf.ethz.ch
}

\date{}
\maketitle

\begin{abstract} 
We prove the existence of an oblivious routing scheme that is $\mathrm{poly}(\log n)$-competitive in terms of $(congestion + dilation)$, thus resolving a well-known question in oblivious routing. 

Concretely, consider an undirected network and a set of packets each with its own source and destination. The objective is to choose a path for each packet, from its source to its destination, so as to minimize $(congestion + dilation)$, defined as follows: The dilation is the maximum path hop-length, and the congestion is the maximum number of paths that include any single edge. The routing scheme obliviously and randomly selects a path for each packet independent of (the existence of) the other packets. Despite this obliviousness, the selected paths have $(congestion + dilation)$ within a $\mathrm{poly}(\log n)$ factor of the best possible value. More precisely, for any integer hop-constraint $h$, this oblivious routing scheme selects paths of length at most $h \cdot \mathrm{poly}(\log n)$ and is $\mathrm{poly}(\log n)$-competitive in terms of $congestion$ in comparison to the best possible $congestion$ achievable via paths of length at most $h$ hops. These paths can be sampled in polynomial time.

This result can be viewed as an analogue of the celebrated oblivious routing results of R\"{a}cke [FOCS 2002, STOC 2008], which are $O(\log n)$-competitive in terms of $congestion$, but are not competitive in terms of $dilation$.
\end{abstract}

\setcounter{page}{0}
\thispagestyle{empty}
\newpage

\section{Introduction and Related Work}
Routing packets in computer networks is a fundamental task and a widely studied problem. Consider the following prototypical scenario:

\begin{center}
  \begin{minipage}{0.85\textwidth}    
    \emph{The network is abstracted as an $n$-node undirected graph. Each edge $e$ has the capacity to transfer $c_e$ packets per time unit, and each packet traversing each edge takes one time unit. The network receives a number of packet delivery requests, where the $i^{th}$ packet should be transmitted from source $s_i$ to destination $t_i$. The objective is to minimize the packet delivery completion time, i.e., to deliver all the packets to their destinations in the shortest span of time possible.}
  \end{minipage}
\end{center}
\smallskip

This paper's contribution can be informally summarized as follows: we present the first \emph{oblivious routing} scheme that is \emph{competitive} in \emph{completion time}. The scheme is oblivious in the sense that for each packet delivery request $i$, the path chosen for this packet\footnote{We note that besides this path selection obliviousness, even the timing schedule of how the packet traverses this path is essentially independent of all other packets in the following sense: we can break time into phases each involving $\Theta(\log n)$ time units, and the phase number in which the packet traverses through any edge on its chosen path is independent of all the other packets.} is decided independent of all the other packets $j\neq i$. This obliviousness property is strongly desirable in numerous networking settings, where packet delivery requests arrive at various points in the network and their routing has to be determined without any central control of the state of the network. The competitiveness guarantee is that, albeit this obliviousness restriction, all the packets are delivered to their destinations within a time that is at most a $\poly(\log n)$ factor larger than the optimal time that is needed to deliver all the packets, i.e., the completion time in the fastest possible way to deliver all the packets.

In what follows, we describe this contribution in a more formal manner, putting it in the context of what has been known about packet routing algorithms in networks and especially the prior results on oblivious routing.

\subsection{Background on routing}
\paragraph{Route selection and scheduling:} The problem of routing a set of packets from their sources $s$ to destinations $t$ while minimizing the completion time involves two components: (I) \emph{route selection}, i.e., choosing the path $p_{s,t}$ along which the packet is transferred from $s$ to $t$, (II) \emph{scheduling} the timing of the packet traversing this path $p_{s,t}$, i.e., at which time unit the packet goes through each edge $e\in p_{s,t}$. A celebrated result of Leighton, Maggs, and Rao~\cite{leighton1994packet} shows that one can decouple these two issues with only a moderate loss---once the routes are selected, we can solve scheduling nearly optimally. Let us make this more precise. We focus on the setting where all edge capacities are uniform (e.g.,~by replacing higher capacity edges with edge multiplicities). Suppose that the routes are selected and consider a given set of paths $\{ p_{s_i, t_i} \}_i$, one path for each packet. Let $\dil$ denote the length of the longest path among these. Also, let the congestion of each edge $e$ be the number of paths that include it and let the overall $\con$ denote the maximum congestion over all edges $e$. While the best completion time depends on the exact model of how are the packets coordinated (e.g., centralized scheduling of packets vs. distributed scheduling), in these settings the completion time is near-optimally characterized by the quantity $\con+\dil$. Clearly, delivering the packets along these paths requires at least $\max\{\con, \dil\} \geq (\con+\dil)/2$ time units (i.e., the completion time is at least $(\con+\dil)/2$). Leighton, Maggs, and Rao show a centralized scheduling algorithm that would deliver all the packets to their destinations with completion time of $O(\con+\dil)$ time units. Moreover, just using their basic \emph{random delays} idea, we can define a schedule that can be implemented in a distributed setting with a near-optimal completion time of at most $O(\log n)\cdot (\con+\dil)$. Thanks to this simple \emph{random delays} idea, if we ignore logarithmic factors, the routing question boils down to the route selection problem while minimizing $\con+\dil$, which is the question we focus on the in remainder of the paper.

\paragraph{Oblivious route selection:} In the oblivious case, we are given a set of sources and destinations $\calD = \{(s_i, t_i)\}_i$, which we call the \textbf{demand}. If the demand is known in advance (i.e., all $\{s_i\}_i$ and $\{t_i\}_i$ are known), the route selection problem can be solved in polynomial time, giving a set of routes that has $\con+\dil$ within a constant factor of the optimum, by a classic result of Srinivasan and Teo~\cite{srinivasan2001constant}. Put together with the aforementioned scheduling result of Leighton et al.~\cite{leighton1994packet}, this gives a constant approximation algorithm for the completion time in packet routing. However, in this scheme, the routes selected by different requests heavily depend on each other, and devising these routes requires central control of the entire network. A much more common scenario in networking is that the packet delivery requests arrive at various points in the network. It is much more desirable if one can select the route of each packet just based on its source and destination, and in a manner \emph{oblivious} to all the other packet routing requests. More formally, a (probabilistic) \emph{routing scheme} can be summarized as follows.

\begin{definition}\label{def:oblivious-routing}
  A \textbf{routing scheme} $R$ for an undirected graph $G=(V,E)$ is a collection of $|V|^2$ distributions $R = \{R_{u, v}\}_{u, v \in V}$, where for each pair of nodes $u, v \in V$, we have one distribution $R_{u, v}$ over paths between $u$ and $v$.
\end{definition} 

A routing scheme can be used to \emph{obliviously} route requests $\{(s_i, t_i)\}$ in the following straightforward way: Given a routing scheme $R$, the $i^{th}$ path is independently sampled from $R_{s_i,t_i}$. Note that each request is routed independently of (the existence of) other requests, hence the routing is \emph{oblivious}.


\paragraph{Quality measures:} Our goal is to find routing schemes which, for every demand, guarantee that the obliviously selected paths are competitive with the optimal (demand-dependent) set of paths in terms of some quality measure. We can measure the quality of the selected paths
using various functions, including the maximum or average congestion,
$\ell_p$ norm of edge congestion, the maximum dilation, etc. Given the discussions above, our primary measure of interest will be the summation $\con+\dil$. As noted before, thanks to the random delays technique for scheduling~\cite{leighton1994packet}, a routing scheme that has polylogarithmic competitiveness in terms of the $\con+\dil$ measure provides a routing scheme that has polylogarithmic competitiveness in terms of the completion time to deliver all packets. 

\subsection{Prior work on oblivious routing}
We next discuss the prior work on routing schemes that are obliviously competitive for other measures, and some of the known obstacles towards being competitive in $\con+\dil$.
 
\paragraph{Results on special graphs:} Valiant and Brebner~\cite{valiant1981universal} were the first to study oblivious packet routing. They focused on the case where the network is a hypercube and showed that any permutation can be routed with completion time $O(\log n)$. Their path selection is based on the ``Valiant's trick'' of routing from the source $s_i$ to a random node $q$ and from there to the destination $t_i$, where the paths from $s_i$ to $q$ and from $q$ to $t_i$ are greedy (fixing differing dimensions one by one). Following them, there have been a number of routing schemes that are obliviously competitive in terms of $\con+\dil$ in a range of special graphs, including expanders, Caley graphs, fat trees, meshes, etc.~\cite{rabin1989efficient, upfal1984efficient, busch2005oblivious, scheideler2006universal, busch2008optimal,  busch2010optimal} See the thesis of Scheideler~\cite{scheideler2006universal} and the survey of R\"acke~\cite{racke2009survey} for more on related work.

\paragraph{Congestion-competitive oblivious routing:} A prominent highlight in prior work is \emph{congestion-competitive} oblivious routing, a topic which was initiated by R\"{a}cke's seminal paper~\cite{racke2002minimizing}, and through a beautiful line of work \cite{bienkowski2003practical, harrelson2003polynomial, azar2004optimal}, culminated in the following celebrated result of R\"{a}cke's~\cite{racke2008optimal}: For every undirected graph, there is a polynomial-time algorithm to build a routing scheme such that for every demand $\calD$, the routing obliviously produces a collection of routes which are $O(\log n)$-competitive in terms of $\con$, compared to the optimal collection of (demand-dependent) routes for $\calD$.

\begin{theorem}[R\"{a}cke~\cite{racke2008optimal}]
For every undirected (multi)-graph $G=(V, E)$, there exists a routing scheme $R = R(G)$ such that for every demand $\calD = \{(s_i, t_i)\}_{i=1}^k$, the maximum expected congestion of routing the demand $\calD$ using $R$ is at most a $O(\log n)$-factor larger than the optimal congestion of $\calD$ in $G$.
\end{theorem}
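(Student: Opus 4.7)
The plan is to reduce the problem to sampling from a distribution over \emph{tree-based} routing schemes. Given $G$, a tree-based scheme is parameterized by a tree $T$ together with a canonical embedding of each tree-edge into a path (or unit flow) in $G$. The scheme $R$ will sample a tree $T$ from a distribution $\lambda$ and then use the unique $T$-path (realized in $G$) between the demand endpoints. Since routing is deterministic conditional on $T$, analyzing the competitive ratio reduces to bounding, for every edge $e \in E(G)$, the expected load $\E_{T \sim \lambda}[\mathrm{load}_T(e)] / \opt$ uniformly, where $\opt$ is the optimum congestion for the demand $\calD$.

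The first main step is the \emph{single-tree against a fixed weighting} subproblem: for any weighting $w$ of edges (an adversarial ``attack''), produce one decomposition tree whose $w$-weighted routing inflation is $O(\log n)$. I would construct such a tree by hierarchical decomposition: starting from $V$, repeatedly partition the current cluster along a sparse cut (in the metric induced by $w$) and recurse on both sides, turning each cut into an internal edge of $T$. Routing across a tree-edge is resolved by spreading flow across the corresponding $G$-cut, and a charging argument over the $O(\log n)$ levels of the laminar hierarchy shows that the $w$-weighted inflation at each edge of $G$ is $O(\log n)$, provided each cut is sparse against the restricted weighting.

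The second step converts this single-tree guarantee into a single distribution $\lambda$ that is simultaneously good against \emph{all} weightings. This is a standard LP-duality / minimax maneuver: set up a two-player zero-sum game where the tree-player chooses a distribution over decomposition trees, the adversary chooses an edge-weighting $w$ (equivalently an edge), and the payoff is the competitive ratio of $T$-routing against $w$. Applying the minimax theorem reduces the question to the single-tree-against-$w$ problem already solved. A polynomial-time constructive version is then obtained by running a multiplicative-weights-update loop over $\poly(n)$ rounds, invoking the hierarchical-decomposition oracle on the current weighting at each round, yielding a distribution $\lambda$ supported on polynomially many trees.

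The main obstacle is the first step: the sparse-cut primitive must be robust to an \emph{arbitrary} weighting so that only one $\log n$-factor is lost per hierarchy level (rather than one per level times additional $\log$'s from the cut approximation). This is precisely where the cut-matching / flow-based decomposition must be analyzed in the weighted setting, and where the improvement from R\"acke's FOCS~2002 $O(\log^3 n)$ to the STOC~2008 $O(\log n)$ lives; balancing the sparsity guarantee against the depth of the hierarchy, without accumulating extra logarithmic factors, is the technically delicate heart of the argument.
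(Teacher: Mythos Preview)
The paper does not prove this theorem. It is stated in the introduction as a prior result due to R\"acke~\cite{racke2008optimal} and is cited purely as background; there is no proof in the paper to compare your proposal against.

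That said, your outline is broadly in the right spirit but mis-identifies the single-tree primitive that makes the $O(\log n)$ bound work. R\"acke's 2008 argument does use the minimax/multiplicative-weights reduction you describe in your second step, but the first step is \emph{not} a hierarchical sparse-cut decomposition with a per-level charging argument. Instead, the key observation is a duality between congestion and distance: for a fixed edge-weighting $w$, the single tree one needs is an FRT-style probabilistic tree embedding of the metric $(V, w/c)$ with $O(\log n)$ expected stretch. The $O(\log n)$ factor then comes \emph{entirely} from the FRT stretch bound, not from accumulating one logarithm per level of a cut hierarchy. Your sparse-cut-per-level approach is closer to the earlier $O(\log^3 n)$ construction; getting it down to a single $\log n$ that way, as you acknowledge in your last paragraph, is exactly the obstacle, and R\"acke's 2008 paper sidesteps it by switching to the metric-embedding viewpoint rather than sharpening the cut-based analysis.
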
 

R\"{a}cke's routing scheme has the additional property of being \emph{tree-based} (see \Cref{sec:tree-like-routing-impossible} for a formal definition). Moreover, the current state-of-the-art approaches are all based on tree-based routing schemes, which greatly simplify the process of constructing competitive oblivious routings. A significant challenge this paper needed to overcome is the fact that tree-based routing schemes do not exist in our setting of jointly minimizing the congestion and dilation. To this end, we develop a theory of constructing routings using \emph{partial trees} which allow for greater flexibility at the cost of increased intricacy of the construction.


\paragraph{Oblivious routing for congestion and dilation:} Considering that both $\con$ and $\dil$ impact packet delivery, it would be very desirable to be competitive in both, or just their summation. Unfortunately, it is well-known that R\"{a}cke's routing scheme is not competitive in terms of $\dil$ and it can select paths that are arbitrarily longer than the paths in the optimal collection. Because of this, while the routing is competitive in the congestion measure, it is not competitive in the completion time measure, or more concretely in terms of $\con+\dil$. Of course, if we focus on $\dil$ alone, it is trivial to be competitive by simply routing each packet along the shortest path between its source and destination. 

\paragraph{Oblivious routing for other measures:}
Gupta, Hajiaghayi, and R\"acke~\cite{englert2009oblivious} study a range of rather general measures for oblivious routing. Suppose for each commodity $i\in\{1, ..., k\}$, we want a (fractional) flow from source $s_i$ to destination $t_i$. For a given set of flows for the commodities, for each edge $e$, let us use $f_i(e)$ to denote the amount of commodity-$i$ flow passed through $e$. Consider a \emph{load function} $L_e:\mathbb{R}^k_{+}\rightarrow \mathbb{R}_{+}$ where $L_e := \ell(f_{1}(e), \dots , f_{k}(e))$ defines the \emph{load} of edge $e$. Gupta et al.~\cite{gupta2006oblivious} provide two results: (1) an $O(\log^2 n)$ competitive oblivious routing algorithm for the summation of loads of different edges, assuming that the load function $\ell$ is the class of monotone sub-additive functions, (2) an $O(\log^2 n \log\log n)$ competitive oblivious algorithm for the maximum of the loads of different edges assuming that the load function is a norm. In either case, their oblivious routing does not need to know the load function $\ell$. 
Englert and R\"{a}cke~\cite{englert2009oblivious} extended this framework and presented an $O(\log n)$-competitive oblivious algorithm for the case that the load function is a monotone norm, and we take the $\ell_p$ norm of the loads of different edges as our measure for competitiveness.


\paragraph{An impossibility?} The goal of being competitive in both $\con$ and $\dil$ has been discussed in the literature of oblivious routing as an ultimate goal~\cite{aspnes2006eight, racke2009survey}. However, the discussion often soon concludes in an ``impossibility'': Suppose that we interpret the goal as being competitive in terms of $\con$ and also in terms of $\dil$, simultaneously. This is not possible. Consider two neighboring nodes $u$ and $v$ that, besides the edge between them, are also connected with $\Theta(\sqrt{n})$ disjoint paths of length $\Theta(\sqrt{n})$. If there are $\sqrt{n}$ packets that should go from $u$ to $v$, to be $\poly(\log n)$-competitive in $\con$, at least $\sqrt{n} - \poly(\log n)$ of packets should be routed through those long paths. In an obliviously-competitive routing scheme, that means each packet should be routed through that long path with probability $1-o(1)$. But then, if we consider a demand with just one packet, the optimal $\dil$ is simply $1$, while this routing scheme will have  $\dil=\Theta(\sqrt{n})$ with probability $1-o(1)$. We note that this example also shows that, even if we make the summation $\con+\dil$ as our measure for competitiveness, $\poly(\log n)$-competitive oblivious routing appears impossible. Indeed, noting this apparent impossibility for general graphs, Aspnes et al.~\cite{aspnes2006eight} asked for a workaround in their list of prominent open problems in distributed computing. Their suggestion was that it might be still feasible for special networks: ``\emph{Another important open problem is to find classes of networks in which oblivious routing gives C+D [i.e., $\con+\dil$] close to the off-line optimal... Such a result have immediate consequences in packet scheduling algorithms.}''

Our aim in this paper is to have a solution for \emph{all} graphs. This calls for re-examining the above impossibility argument. The astute reader notices that, in this simple example, changing the requirement slightly makes the problem possible and still perfectly useful: if we are given a target $\dil$ upper bound $h$, which is within $\poly(\log n)$ factor of the optimal $\dil$ for the input instance, then the routing scheme can select the routes so as to remain below this target $h$ and still be  $\poly(\log n)$-competitive in terms of the $\con$.  Similarly, when $\con+\dil$ is the measure, given a $\poly(\log n)$ factor upper bound on the value of the optimum, we can get a $\poly(\log n)$-competitive oblivious routing in terms of $\con+\dil$. Of course, in these definitions, the routes may depend on the given $\dil$ upper bound, or the upper bound on $\con+\dil$. The assumption of having these upper bounds can be removed by standard ideas such as guessing and doubling. Furthermore, this choice also has the added flexibility of exploring the optimal trade-off between feasible values for $\con$ and $\dil$. We note that even though in this simple example the problem becomes possible and easy, achieving such a result for general graphs is far from trivial. Indeed, since we can set the upper bound arbitrarily high, this is a strictly stronger requirement than requiring competitiveness in $\con$ alone (as in results of \cite{racke2002minimizing,racke2008optimal} mentioned above). This is exactly the goal that we achieve in this paper.

\subsection{Our contribution}
We show a routing scheme that, given a $\dil$ bound $h$, is obliviously $\poly(\log n)$-competitive in terms of $\con$ compared to the best $\con$ achievable via paths of length $h$, and our routing scheme uses paths of length at most $h \cdot \poly(\log n)$.

\begin{theorem}\label[theorem]{thm:main-informal}
(\textbf{Informal})
For every undirected (multi)-graph $G = (V, E)$ and every dilation bound $h \ge 1$, there exists a routing scheme $R = R(G, h)$ such that for every demand $\calD = \{(s_i, t_i)\}_{i=1}^k$, the selected paths have a hop-length of at most $h \cdot O(\log^7 n)$, and in which the expected maximum $\con$ is within an $O(\log^2 n \cdot \log^2 (h \log n))$ factor of the optimum $\con$ when routing along paths of length at most $h$. 
\end{theorem}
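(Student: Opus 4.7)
The plan is to adapt R\"{a}cke's tree-based approach by replacing single spanning trees with distributions over \emph{partial trees} — trees that cover only a localized region of the graph at a controlled hop-scale. The impossibility argument from the introduction already rules out a single-tree scheme that handles all hop-scales simultaneously, since committing to one route between $u$ and $v$ must lose either in $\con$ or in $\dil$ on some demand. Partial trees sidestep this by handling each demand pair at a scale close to its own hop-distance, so the embedded tree paths are never much longer than what an optimal hop-$h$ router uses.

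Concretely, for each dyadic scale $r \in \{1, 2, 4, \dots, 2^{\lceil \log_2 h \rceil}\}$ I would construct a distribution over collections of partial trees, where each tree covers a cluster of $G$-hop-diameter $O(r \cdot \poly\log n)$ and its tree edges embed into $G$-paths of length $O(r \cdot \poly\log n)$ with a $\poly(\log n)$-competitive edge-congestion guarantee. The construction would use hop-constrained probabilistic padded decompositions (Bartal / FRT-style, truncated at radius $r$) so that each vertex lies in such a cluster with constant probability. The oblivious scheme for a pair $(s,t)$ at hop-distance $\Theta(r)$ then samples a partial tree at scale $r$ containing both endpoints and routes through its embedded $G$-path. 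To obtain the global competitive ratio, I would invoke an LP-duality / multiplicative-weights framework in the style of \cite{racke2008optimal}: the existential claim reduces to showing that for any demand $\calD$ with optimal hop-$h$ congestion $\OPT$, some \emph{fixed} partial-tree-based routing of $\calD$ achieves congestion $\poly(\log n) \cdot \OPT$. One classifies each demand pair by its optimum hop-distance, assigns it to a scale of roughly that size, and within each scale applies R\"{a}cke-style aggregation over the local partial-tree distribution.

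The main obstacle I expect is \emph{cross-scale accounting}: each edge can simultaneously carry load from pairs at all $O(\log h)$ scales, and the charging must avoid double-counting while still being oblivious. This is also where the $\log^2(h \log n)$ factor in the competitive ratio should arise — one factor of $\log h$ for splitting the optimum across scales and another from the local partial-tree competitiveness loss — while the $h \cdot O(\log^7 n)$ dilation bound falls out automatically since every embedded path at scale $r \le h$ has length $O(r \cdot \poly\log n)$. A further subtlety will be ensuring the per-scale padded decompositions themselves compose without blowing up the polylog factors on embedded-edge congestion, which is likely the source of the relatively high exponent $7$ in the dilation bound.
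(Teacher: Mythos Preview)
Your proposal shares the starting intuition (partial trees instead of spanning trees) but misses the central technical obstacle and misstates the duality reduction.

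The multi-scale decomposition over $r\in\{1,2,\dots,h\}$ is not in the paper and is not needed: the hop bound $h$ is handed to the scheme, so the paper works at a single scale throughout. There is no per-pair classification by hop-distance and no cross-scale accounting, and your explanation of the $\log^2(h\log n)$ factor as ``one $\log h$ for scales, one for local competitiveness'' is not how it arises. In the paper both factors come from two nested uses of one object, a \emph{$\calD\at{1}$-router} (a partial-tree distribution that routes the unit-per-edge demand $\calD\at{1}$ with congestion $O(\log n\cdot\log(h\log n))$); the $\log^7 n$ dilation is likewise the product of two $O(\log^3 n)$ hop-stretch factors and one $O(\log n)$ repetition factor, not a composition of per-scale padded decompositions.

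More seriously, your LP-duality reduction is stated backwards. You claim it suffices to show that ``for any demand $\calD$ \dots some fixed partial-tree-based routing of $\calD$ achieves congestion $\poly(\log n)\cdot\OPT$.'' That yields only a demand-\emph{adaptive} routing, not an oblivious one. R\"acke's framework works because complete-tree schemes enjoy a universality property: routing the single demand $\calD\at{1}$ well implies routing every demand well (\Cref{lemma:d1-to-all-trees}), so the dual need only exhibit, for each edge-weighting, one good tree. The paper shows in \Cref{sec:tree-like-routing-impossible} that this universality provably fails in the hop-constrained setting, and for partial trees there is no analogue at all --- a partial tree simply cannot route $(s,t)$ when one endpoint is absent. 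Your proposal never says what the scheme does in that event; ``samples a partial tree at scale $r$ containing both endpoints'' hides the entire difficulty, since conditioning on containment is demand-dependent and destroys the congestion bound.

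Handling these failures is the actual contribution of the paper. First, LP duality (against $\calD\at{1}$ only) yields a $\calD\at{1}$-router with exclusion probability $\le 1/(4h)$; this is shown to be a \emph{$\tfrac12$-subflow routing}: an oblivious scheme that is $\alpha$-competitive for every demand provided one may silently discard up to half of each pair's mass, where the discarded part depends adaptively on $\calD$ and is invisible to the scheme. Second, to drive failures from $\tfrac12$ down to $n^{-O(1)}$, one samples $r=O(\log n)$ independent ``cover paths'' from the stage-one scheme and then samples a second $\calD\at{1}$-router conditioned on its tree containing \emph{every node of all $r$ cover paths}; the final route is the tree path in that second tree. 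With probability $1-2^{-r}$ at least one cover path is a non-failure, and its edges can be charged against the second router's $\calD\at{1}$ congestion. This two-level correction is the missing idea in your outline.
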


See \Cref{thmGeneralRouting} for the formal statement. \Cref{thm:main-informal} directly gives us a way of obtaining a routing scheme that is obliviously competitive in terms of the completion time. As explained above, thanks to known scheduling results, we can simply focus on the summation $\con+\dil$. If we set $h$ to be the optimal value of $\con+\dil$, or a constant-factor approximation of it, the produced routing is $\poly(\log n)$-competitive in $\con+\dil$, and thus also  $\poly(\log n)$-competitive in terms of the completion time, via random delays~\cite{leighton1994packet}.



\subsection{Follow-up work} \rev{The computational aspects of hop-constrained oblivious routings have seen significant amount of follow-up work to the conference version of this paper. Most notably, very recent work by Ghaffari, R\"acke, and Ghaffari~\cite{hopexpander2022} has shown that $h$-hop oblivious routing distributions can be constructed in $\poly(h) \cdot m^{1+o(1)}$, which is almost linear for the important case of $h \le n^{o(1)}$.}


\section{Preliminaries}\label{sec:hbor-prelims}
We first give some common notation we use throughout.

\paragraph{General:}
We denote by $[k] = \{1, 2, \ldots, k\}$ for some non-negative integer $k$ and $A \sqcup B$ denotes the disjoint union of $A$ and $B$. We often use the Iverson bracket notation $\1{\text{condition}}$ which evaluates to $1$ when the $\mathrm{condition}$ is true and $0$ otherwise.
We assume that all graphs are undirected and we typically assume the existence of an underlying graph $G = (V, E)$ with $n := |V|$.

\paragraph{Weighted graphs:}
A \textbf{weighted graph} $G = (V, E, w)$ is an undirected graph $(V, E)$ along with \textbf{weights} $w : E \to [1/L, L]$. Equivalently, a \textbf{capacitated graph} $G = (V, E, c)$ is an undirected graph along with \textbf{capacities} $c : E \to [1/L, L]$. Here, the value $L$ is the \textbf{aspect ratio} and throughout this paper we assume that $L = \poly(n)$. Many intermediate results only apply to complete graphs, i.e., where $E = \binom{V}{2}$, hence we will explicitly disambiguate between general and complete weighted graphs. In the context of this paper, assuming graphs are complete is mostly without loss of generality. In particular, one can often transform any non-complete weighted graph $G = (V, E, w)$ with aspect ratio $L$ into an equivalent complete graph $G'$ with aspect ratio $L' = n^2 \cdot L$ which gives a weight of $L'$ (or $1/L'$ in maximization problems) to any edge not in $E$ without affecting the results. We use both $w_e$ and $w(e)$ to refer to the weights \rev{and we sometimes disambiguate by specifying the graph in the subscript (e.g., $w_G, C_G$). We also sometimes use the vector notation (e.g., $\vec{w}, \vec{c}$) to emphasize that $c$ is a vector and not a scalar.}

\paragraph{Distances and hop-constraints:} Let $p = (p_0, p_1, \ldots, p_\ell)$ be a path in $G = (V, E, w)$. We denote the number of \textbf{hops} in $p$ with $\hop(p) := \ell$ and the sum of weights of the edges in the path with $w(p)$. Paths are assumed to be non-simple unless explicitly stated otherwise. We define the distance between $u, v \in V$ as $\dist(u, v) := \min \{ w(p) \mid \text{path $p$ between } u, v \}$. Furthermore, we define the \textbf{hop-constrained distance} with $h > 0$ as $\dist\at{h}(u, v) := \min \{ w(p) \mid \text{path $p$ between } u, v \text{ with } \hop(w) \le k \}$. \rev{If there are no paths of at most $h$ hops between $u, v$ we define $\dist\at{h}(u, v) := \infty$.}

\paragraph{Trees:} Trees can be either unweighted $T = (V, E)$ or weighted $T = (V, E, w_T)$. Trees are often rooted, in the sense that there is a special node $r \in V$. With $T_{u,v}$ we denote the (unique) path on $T$ between $u$ and $v$. For example, in a weighted tree $T$ it holds that $d_T(u, v) = w_T(T_{u, v})$ for all $u, v \in V$.

\subsection{Approximating hop-constrained distances}\label{sec:approx-hop-constrained-distances}

In this section, we describe an important ingredient from prior work: partial tree embeddings, and how to use them to approximate hop-constrained distances. To give some context, the seminal results of \cite{bartal1996probabilistic,bartal1998approximating,fakcharoenphol2003tight} have shown that any metric space can be approximately embedded into a (distribution over) weighted trees. This has led to major advances in the field of approximation algorithms because many NP-hard problems on general graphs can be exactly solved on trees in polynomial time (see, for instance, the survey \cite{gupta2011approximation}).

However, such tree embedding results are not applicable to problems involving hop-constrained distances $d\at{h}_G$ since they do not form a metric space. To address this issue, very recent work~\cite{haeupler2020tree} proposed using \emph{partial} tree embeddings, where only a fraction of nodes are embedded in any single tree. Their paper also shows that using standard (non-partial) trees in such a setting necessarily leads to unsatisfactory polynomial losses in approximation. We formalize the notion of embedding a graph into a tree.
\begin{definition}\label{def:PartialTreeEmbedding}
  A \textbf{partial tree embedding} $(T, T^G)$ on a graph $G = (V(G),\allowbreak E(G))$ consists of a rooted tree $T = (V(T), E(T))$ with $V(T) \subseteq V(G)$, and a mapping $T^G$ which maps \rev{each tree edge $\{u, v\} = e \in E(T)$ to a path $T_e^{G}$ in $G$ with endpoints $u$ and $v$}.
\end{definition}
We extend the notation from \Cref{def:PartialTreeEmbedding} to nodes in $T$ which are not adjacent: For any two vertices $u, v \in V(T)$, if $e_i$ is the $i^{th}$ edge in $T_{u,v}$ (ordered from $u$ to $v$) then $T_{u,v}^G := T_{e_1}^G \oplus T_{e_2}^G \oplus \ldots$ where $\oplus$ is concatenation. Finally, we extend the notion to weighted graphs $G$ where the distances in the embedding must ``dominate'' the distances in $G$.

\begin{definition}\label{def:DominatingEmbedding}
  A partial tree embedding $(T, T^G)$ is \textbf{dominating} if (i) it is defined with respect to \emph{weighted} graph $G = (V(G), E(G), w_G)$, (ii) $T = (V(T), E(T), w_T)$ is a weighted tree, and (iii) $w_G(T^G_{u, v}) \le d_T(u, v)$ for all $\{u, v\} \in E(T)$.
\end{definition}
Note that if $(T, T^G)$ is a dominating partial tree embedding, we have that $w_G(T^G_{u, v}) \le d_T(u, v)$ for all $u, v \in V(T)$.

Distributions over partial tree embedding are sufficiently expressive to approximate hop-constrained distances $d\at{h}_G$ in any graph $G$. However, the approximation here is bicriteria: the distances are stretched by a factor $\alpha \ge 1$ and the hop lengths are stretched by a factor $\beta \ge 1$. Furthermore, while some nodes are missing from any particular tree, each node $v \in V(G)$ must be embedded in at least an $1 - \eps$ fraction of the trees.

\begin{definition}[$h$-Hop Partial Embedding Distribution]\label{def:partialEmbeddingDistrib}
  An \textbf{$h$-hop partial embedding distribution} is a distribution over dominating partial tree embeddings $\calT$ on a weighted graph $G$. We annotate $\calT$ with the following properties:
  \begin{itemize}  
  \item $\calT$ has \textbf{hop stretch} $\beta \ge 1$ if each partial tree embedding $(T, T^G) \in \supp(\calT)$ has $\hop(T^G_{u, v}) \le \beta \cdot h$ for all $u, v \in V(T)$.
  \item $\calT$ has \textbf{exclusion probability $\eps > 0$} if each node $v \in V(G)$ we have $\Pr_{\rev{(T, \cdot) \sim \calT}}[\allowbreak v \in V(T)] \ge 1 - \eps$.
  \item $\calT$ has \textbf{expected distance stretch} $\alpha \ge 1$ if $\E_{(T, \cdot) \sim \calT} \left[ d_T(u,v) \cdot \1{ u,v \in V(T) } \right ] \allowbreak \le \alpha \cdot d_G\at{h}(u, v)$ for all $u, v \in V(G)$.    
  \end{itemize}
\end{definition}

\begin{theorem}[\cite{haeupler2020tree}]
  \label{thm:hst}
  For every (complete) weighted graph $G$ with polynomially-bounded weights, every $h \ge 1$, and every $0 < \eps < 1/3$ there exists an $h$-hop partial embedding distribution with hop stretch $O(\frac{\log^3 n}{\eps})$, expected distance stretch $O(\log n \cdot \log \frac{\log n}{\eps} )$, and exclusion probability $\eps$. Moreover, the distribution can be sampled in polynomial time.
\end{theorem}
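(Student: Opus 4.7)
The plan is to adapt the celebrated FRT probabilistic tree embedding~\cite{fakcharoenphol2003tight} to the hop-constrained setting, paying a bicriteria cost in both distance and hop stretch, with a small per-node exclusion probability. The latter is unavoidable (in the sense of the lower bound alluded to in~\cite{haeupler2020tree}), because the hop-constrained distance $\dist\at{h}_G$ is not a metric: combining two $h$-hop paths yields a $2h$-hop path, so any embedding into a single tree that preserves all pairwise $\dist\at{h}_G$ distances with small hop stretch would have to embed nodes that cannot be simultaneously well-clustered.

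First I would set up the hop-relaxed (quasi-)metric. While $\dist\at{h}_G$ fails the triangle inequality, it satisfies the bicriteria relaxation $\dist\at{2h}_G(u,w) \le \dist\at{h}_G(u,v) + \dist\at{h}_G(v,w)$, and more generally $\dist\at{kh}_G$ satisfies the triangle inequality up to $k$ concatenations. All subsequent calculations would carefully track the hop budget consumed by each concatenation, so that the final embedded paths $T^G_{u,v}$ use at most $\beta \cdot h$ hops. Next I would perform an FRT-style padded decomposition at each scale $2^i$ for $i = 0, 1, \ldots, O(\log(nL))$: pick a uniformly random permutation $\pi$ of $V$ and a random radius $r_i \in [2^i, 2 \cdot 2^i]$, and assign each node $u$ to the cluster whose center is the first $v \in \pi$ with $\dist\at{h}_G(u, v) \le r_i$. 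The laminar refinement of these partitions yields a partial tree $T$ in the usual way, with edge weight $O(2^i)$ at the level-$i$ refinement step.

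Then I would handle the partiality. A node $u$ is declared \emph{missing} at level $i$ if no center with a valid $h$-hop ball of radius $r_i$ covers it, or if the local cluster structure would force the embedded path $T^G_{u, v}$ to exceed the hop budget. By setting the per-level failure probability to $\eps / O(\log(nL))$ and union-bounding over the $O(\log(nL))$ levels, each node is missing with total probability at most $\eps$, yielding the required exclusion probability. Domination is automatic: tree-edge weights are set to the corresponding $\dist\at{h}_G$ values in $G$, so $w_G(T^G_{u,v}) \le \dist_T(u,v)$ by construction. The distance stretch analysis is the standard FRT charging argument applied to $\dist\at{h}_G$, giving $O(\log n)$, with an extra $O(\log(\log n/\eps))$ factor that arises when conditioning on the event that both endpoints survive the exclusion process at all relevant scales.

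The main obstacle will be the hop stretch bound of $O(\log^3 n / \eps)$. A naive analysis gives $O(h \log(nL))$ hops: a root-to-leaf path traverses $O(\log(nL))$ levels, each contributing an $h$-hop segment between a node and its cluster center. To sharpen this, one must argue that at each level the embedded segment can actually be realized within $h$ hops (not $2h$ or more) by choosing cluster centers whose $h$-hop balls genuinely cover the cluster; when this fails, the node is added to the exclusion set, which is exactly where the $1/\eps$ factor enters. The additional polylogarithmic factors come from (i) having to round radii to powers of two, (ii) handling the concatenation of segments at consecutive levels where a node may need to route via its parent cluster's center and back, and (iii) paying a $\log n$ factor for the padded-decomposition guarantee itself. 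Calibrating these three sources of loss simultaneously with the exclusion probability is the delicate core of the construction, and this is exactly the contribution of~\cite{haeupler2020tree} that we invoke as a black box.
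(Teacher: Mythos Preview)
The paper does not prove \Cref{thm:hst}; it is stated with the citation~\cite{haeupler2020tree} and used throughout as a black box (most crucially inside the proof of \Cref{lemmaDOneRouter}). There is therefore no proof in this paper to compare your proposal against. You seem to recognize this yourself in your final sentence, where you say the delicate calibration ``is exactly the contribution of~\cite{haeupler2020tree} that we invoke as a black box.''

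As a high-level sketch of how such a result \emph{might} be proved, your outline is plausible in spirit---an FRT-style hierarchical padded decomposition with per-level exclusion to handle the failure of the triangle inequality for $\dist\at{h}_G$---but it is not a proof. Several steps are asserted rather than argued: you do not actually explain how the per-level exclusion probability of $\eps/O(\log(nL))$ is achieved simultaneously with the padded-decomposition cutting guarantee; you do not justify why the hop budget accumulates to $O(\log^3 n/\eps)$ rather than some other polylog; and the claimed extra $O(\log(\log n/\eps))$ factor in the distance stretch is stated without a mechanism. If your goal is to reproduce the result of~\cite{haeupler2020tree}, these are exactly the places where real work is required; if your goal is merely to cite it, then the sketch is unnecessary and the single citation suffices, which is what the present paper does.
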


We elaborate on the notion in which $d\at{h}$ is approximated by $h$-hop partial embedding distributions. An alternative way to view the expected distance stretch $\alpha$ is to interpret it as a variant of the expected stretch of the conditional distribution, i.e, for $\eps < 1/3$ we have
\begin{align*}
  \E_{(T, \cdot) \sim \calT}\left[d_T(u,v) \cdot \1{u, v \in V(T)} \right] = \Theta\left(\E_{(T, \cdot) \sim \calT}\left[d_T(u,v) \mid u, v \in V(T) \right]\right) .
\end{align*}
With this in mind and the fact that partial embedding distributions $\calT$ are over dominating embeddings, we can rewrite the guarantee of \Cref{def:partialEmbeddingDistrib} in the following way for every $u, v \in V(G)$:
\begin{align*}
  d_G\at{\beta \cdot h}(u, v) \le \E_{(T, \cdot) \sim \calT}\left[d_T(u,v) \mid u, v \in V(T) \right] \le O(\alpha) \cdot d_G\at{h}(u, v) .
\end{align*}

\subsection{Hop-constrained oblivious routings}\label{sec:hop-constrained-definitions}

In this section we give a formal definition of hop-constrained oblivious routings.

\medskip

\textbf{Fractional demands and routings.} The discussion in the introduction mostly focused on \emph{integral} demands and routings---where the demand was a sequence of $(s_i, t_i)$ pairs and the routes were single paths between $s_i$ and $t_i$. However, we express our technical results in the more general---fractional---setting (see further discussion about this later in this section). To this end, a set of requests $\{( s_i, t_i )\}_i$ are abstracted and generalized via the (fractional) demand matrix $\calD : V \times V \to \mathbb{R}_{\ge 0}$ where $\calD_{s, t}$ intuitively represents the number of requests between $s$ and $t$. In other words, $\{(s_i, t_i)\}_i$ would correspond to the demand (matrix) $\calD_{u, v} := |\{ i : s_i = u, t_i = v \}|$. Similarly, a path between $s$ and $t$ will typically be replaced by a distribution of paths between $s$ and $t$. This requires us to slightly revisit the notions of dilation and congestion, which were previously defined only in the integral case. The dilation of a distribution of paths is $h$ if the distribution is supported over paths of at most $h$ hops. Next, we formally introduce the notion of flows, capacitated graphs and congestion.
\begin{definition}[Flows]
  A \textbf{flow} $f$ is a non-negative vector indexed over the edges of the underlying graph $G$, i.e., $f \in \mathbb{R}_{\ge 0}^{E(G)}$. Each (possibly non-simple) path $p$ has a naturally associated flow which we denote as $\flow(p) = ( \flow(p, e) )_{e \in E(G)} \allowbreak \in \mathbb{R}_{\ge 0}^{E(G)}$ where $\flow(p)_e$ is defined as the number of times $p$ goes through $e$.
\end{definition}
\rev{We establish a partial order on the flow vectors $f, g \in \mathbb{R}^{E(G)}$ where we write $f \le g$ when the inequality holds element-wise.}

We express our results on capacitated graphs, which generalize multi-graphs, in the sense that a multi-graph with $k_e$ copies of an edge $e$ corresponds to a capacitated graph with an edge capacity $c_e := k_e$. This allows us to define the congestion of a flow with respect to capacities.
\begin{definition}[Capacitated graphs and congestion]
  Capacitated graphs $G = (V, E, c)$ are undirected graphs with \textbf{capacities} $c : E \to [1/L, L]$, where $L = \poly(n)$ is the aspect ratio. Given a flow $f \in \mathbb{R}_{\ge 0}^{E}$, we define the \textbf{congestion} $\cong(f) = \max_{e \in E} f_e / c_e$ to be the maximum ratio of flow over capacity, across all edges.
\end{definition}

\medskip

\textbf{Optimally routing a demand via a (hop-constrained) routing scheme.} Given a particular demand $\calD$, we are interested in the best hop-constrained (fractional) routing $R = \{ R_{s, t} \}_{s, t \in V}$ of $\calD$. This is formalized in the following definition.%
\begin{definition}[Optimal hop-constrained routings]
  A \textbf{demand} is a matrix $\calD : V \times V \to \mathbb{R}_{\ge 0}$. Given a demand $\calD$ and a routing scheme $\{R_{s, t}\}_{s, t \in V}$ for a capacitated graph $G$, \rev{we define the \textbf{flow routing the demand $\calD$ using $R$} as
  \begin{align*}
    \flow(\calD, R) = \sum_{s, t} \E_{p \sim R_{s,t}}[ \flow(p) \cdot \calD_{s, t} ] \in \mathbb{R}^{E(G)}_{\ge 0} .
  \end{align*}
The \textbf{congestion of routing $\calD$ using $R$} is $\cong(\calD, R) := \cong(\flow(\calD, R))$.}
Finally, given a demand $\calD$ and a hop constraint $h \ge 1$, we define the \textbf{optimal $h$-hop routing of $\calD$}, denoted by $\opt\at{h}(\calD)$, as the minimum congestion $\cong(\calD, R)$ over all possible routing schemes $R = \{ R_{s, t} \}_{s, t \in V}$ supported over paths of at most $h$ hops.
\end{definition}

\rev{As an intuitive explanation for the definition of $\flow(\cdot, \cdot) \in \mathbb{R}^{E(G)}_{\ge 0}$, we note that $\flow(\calD, R) \in \mathbb{R}^{E(G)}_{\ge 0}$ is a vector where the coordinate corresponding to $e \in E(G)$ is exactly the expected number of routes going over an edge $e$ when routing the demand $\calD$ using the routing scheme $R$ in an oblivious way.}

We reiterate that all routings and distributions in our paper are fractional, in the sense that a unit demand from $s$ to $t$ is carried over a distribution of paths, incurring fractional congestion on each one of these paths. Another important point of emphasis in the definition of $\opt\at{h}(\calD)$ is that the routing scheme $P$ can depend on the demand $\calD$, i.e., is \emph{adaptive} to the demand. In contrast, a good oblivious routing is a single routing scheme that is \emph{obliviously competitive} with respect to \emph{all demands} $\calD$. We now define the principal concept in our paper.
\begin{definition}
  An \textbf{$h$-hop oblivious routing} for a graph $G = (V, E)$ is a routing scheme $R$ that is additionally annotated in the following way:
  \begin{enumerate}
  \item $R$ has \textbf{hop stretch} $\beta \ge 1$ if for all $s, t \in V$ all paths $p \in \supp(R_{s, t})$ have $\hop(p) \le \beta \cdot h$.
  \item $R$ has \textbf{congestion approximation} $\alpha \ge 1$ if for all demands $\calD : V \times V \to \mathbb{R}_{\ge 0}$ we have that $\cong_G(\calD, R) \le \alpha \cdot \opt\at{h}(\calD)$.
  \end{enumerate}
\end{definition}

\textbf{Integral vs. fractional routings.} Our choice to express our results in the fractional setting has multiple benefits. For one, the demand in our (fractional) setting is scale-invariant, in that a routing scheme that is competitive with respect to $\calD$ will be competitive with respect to $\gamma \cdot \calD$, for any $\gamma > 0$. Furthermore, one can easily recover the integral setting from the fractional one, making our choice more general. We elaborate on this. Suppose we are given an $h$-hop oblivious routing $R$ with hop stretch $\beta$ and congestion approximation $\alpha \gg \log n$. Given a set of requests $\{(s_i, t_i)\}_{i}$, suppose that some set of paths $\{p^*_i\}_{i}$ with at most $h$ hops connecting the source-sink pairs has optimal congestion $C^*$. For each request $i$ we independently randomly sample a path $p'_i \sim R_{s_i, t_i}$.

We now argue that $\{ p'_i \}$ has dilation $\beta \cdot h$ and congestion at most $O(\alpha) \cdot C^*$ with high probability. The dilation bound follows from definition. We now argue that for each edge $e$ the expected number of times $\{p'_i\}$ crosses $e$ is $\alpha \cdot C^* \cdot c_e$. First, for all $u, v$ we set $\calD_{u, v} := |\{ i : s_i = u, t_i = v \}|$ and then let $P_{u, v}$ be a uniform distribution over the $\calD_{u, v}$ paths with endpoints $u, v$. We note that $\opt\at{h}(\calD) \le \cong_G(\calD, P) \le C^*$, where the first inequality is by definition and second is due to $\cong_G(\calD, P)$ being exactly equal to the congestion of a set of paths $\{p^*_i\}_i$. Therefore, the expected number of drawn paths $\{p'_i\}$ crossing an edge $e$ is at most $\alpha \cdot c_e \cdot \opt\at{h}(\calD) \le \alpha \cdot c_e \cdot C^*$, as required.

We now argue that the congestion of $\{p'_i\}_i$ is at most $\alpha \cdot C^*$ with high probability. Since each path was drawn independently at random and the paths can always assumed to be simple (simplying a path does not increase the congestion or the dilation), the number of paths $\{p'_i\}$ crossing an edge $e$ can be seen as a sum of independent $\{0, 1\}$-variables. We can apply a standard Chernoff bound and conclude that this number is at most $O(c_e \cdot \alpha \cdot C^* + \log n)$ with high probability. Since $C^* \cdot c_e \ge 1$ and $\alpha \gg \log n$, we have that $O(c_e \cdot \alpha \cdot C^* + \log n) = O(c_e \cdot \alpha \cdot C^*)$ with high probability. Union bounding over all edges, we conclude that the congestion of $\{ p'_i \}_i$ is at most $O(\alpha) \cdot C^*$ with high probability, as required.

\section{Hop-Constrained Oblivious Routing: A Technical Overview}

We now formally state our main result.

\begin{restatable}{theorem}{thmGeneralRouting}\label{thmGeneralRouting}
  For every (general) capacitated graph $G = (V, E, c)$ \rev{with polynomially\hyp{}bounded capacities} and every $h \ge 1$, there exists an $h$-hop oblivious routing with hop stretch $O(\log^7 n)$ and congestion approximation $O(\log^2 n \cdot \log^2 (h \log n))$.
\end{restatable}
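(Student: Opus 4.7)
The plan is to construct the desired oblivious routing directly from the $h$-hop partial embedding distribution $\calT$ given by \Cref{thm:hst}, adapting R\"acke's tree-routing framework to the partial setting. As a preprocessing step, I would reduce from a general capacitated graph to a complete weighted graph by setting edge weights to $1/c_e$ and giving non-edges a suitably large weight (as sanctioned by the discussion in \Cref{sec:hbor-prelims}), so that hop-constrained congestion questions on $G$ translate into flow/distance questions on a complete weighted graph. I would then invoke \Cref{thm:hst} with $\eps = \Theta(1/\log^4 n)$, yielding a distribution $\calT$ with hop stretch $\beta = O(\log^7 n)$, expected distance stretch $\alpha = O(\log n \cdot \log\log n)$, and exclusion probability at most $\eps$.

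The oblivious routing $R_{s,t}$ would be defined as follows: sample $(T, T^G) \sim \calT$ conditioned on $\{s,t\} \subseteq V(T)$ (which occurs with probability at least $1 - 2\eps \ge 1/2$), and return the graph path $T^G_{s,t}$. The hop-stretch guarantee $\hop(T^G_{s,t}) \le \beta \cdot h = O(\log^7 n) \cdot h$ is immediate from the definition of $\calT$. The remaining and principal task is to prove that for every demand $\calD$ with $C^* := \opt\at{h}(\calD)$, the induced flow satisfies $\cong(\flow(\calD, R)) \le O(\log^2 n \cdot \log^2(h \log n)) \cdot C^*$.

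I would establish the congestion bound in two stages, roughly following R\"acke's template. In the \emph{tree-level} stage, for each tree $(T, T^G)$ in the support, endow every tree edge $\{u,v\}$ with a virtual capacity derived from the capacity of the corresponding graph cut separating the components of $T$ induced by removing $\{u,v\}$; the dominating property lets one argue that the optimal $h$-hop routing in $G$ projects to a routing in $T$ of virtual congestion at most $O(C^*)$, so that the unique tree-path routing of $\calD$ in $T$ has virtual congestion at most $O(\alpha) \cdot C^*$ by the expected-distance-stretch bound. In the \emph{graph pullback} stage, whenever the tree-level flow on $\{u,v\}$ is realized as flow along $T^G_{u,v}$, a graph edge $e$ accumulates this contribution only when $e \in T^G_{u,v}$; a charging argument using the dominating property together with a summation over $\calT$ bounds the expected load on $e$ by $O(\log n \cdot \alpha \cdot C^*) \cdot c_e$.

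The main obstacle is twofold. First, conditioning on $\{s,t\} \subseteq V(T)$ for each pair introduces nontrivial dependence across pairs and breaks the clean per-tree accounting underlying classical R\"acke-style arguments; careful handling of the conditional law---for instance, by reweighting the distribution or by sampling independent trees per pair from a slightly perturbed distribution---is required. Second, the congestion approximation must hold uniformly over every demand $\calD$, not just in expectation for a fixed one. I expect this is dealt with by decomposing the demand across $O(\log(h\log n))$ scales (e.g., by approximate hop-distance of the optimal route and/or by capacity band), constructing an oblivious routing per scale, and combining them via a weighted superposition. The worst-case stratification loss across scales, multiplied across the two kinds of stratification, yields the extra $\log^2(h\log n)$ factor and brings the final congestion approximation to the claimed $O(\log^2 n \cdot \log^2(h\log n))$.
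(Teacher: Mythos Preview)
Your proposal has two genuine gaps that prevent it from going through.

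\textbf{First gap: distance stretch does not translate to congestion without duality.} \Cref{thm:hst} gives you a partial embedding distribution that is good for \emph{distances} in one fixed weighted graph (the one with weights $1/c_e$). That is not the same as controlling congestion for the $\calD\at{1}$ demand, and your ``tree-level / graph pullback'' sketch implicitly assumes it is. In R\"acke's setting the translation works because the embedding is simultaneously good for \emph{all} edge-weightings, and LP duality then turns this into a congestion bound on $\calD\at{1}$. The paper does exactly this duality step in \Cref{lemmaDOneRouter}: it writes an LP whose dual variables are edge weights $\ell_e/c_e$, and invokes \Cref{thm:hst} on the \emph{dual} weighted graph $G'(\ell)$ for every feasible $\ell$. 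A single invocation with weights $1/c_e$ does not yield a $\calD\at{1}$-router.

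\textbf{Second gap: conditioning on $\{s,t\}\subseteq V(T)$ is not enough for the charging argument.} Your tree-level stage says ``the optimal $h$-hop routing in $G$ projects to a routing in $T$''. This projection is precisely where partial trees break the R\"acke template: to charge $\flow(T^G_{s,t})$ against the witness path $P^*_{s,t} = (s=v_0,\ldots,v_h=t)$ you need \emph{all} of $v_0,\ldots,v_h$ in $V(T)$, because only then does $\flow(T^G_{s,t}) \le \sum_i \flow(T^G_{v_i,v_{i+1}})$. With your choice $\eps = \Theta(1/\log^4 n)$ independent of $h$, the probability that all $h+1$ witness nodes lie in $V(T)$ is about $(1-\eps)^{h+1}$, which is negligible once $h \gg \log^4 n$. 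The paper instead takes $\eps = 1/(4h)$ so that this probability is at least $1/2$, declares the remaining cases ``failures'', and obtains only a $\tfrac{1}{2}$-\emph{subflow} routing (\Cref{lemma:d1-router-is-subflow}). The real work, and the source of the claimed exponents, is the failure-correction step (\Cref{lemma:gamma-correcting}): sample $r=O(\log n)$ independent cover paths from the subflow routing, then route on a \emph{second} $\calD\at{1}$-router conditioned on containing every node of every cover path. This boosts the failure probability from $1/2$ to $n^{-O(1)}$, and the two successive applications of the $\calD\at{1}$-router congestion bound $O(\log n \cdot \log(h\log n))$ are what produce the $O(\log^2 n \cdot \log^2(h\log n))$ factor. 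Your scale-stratification idea is not how that factor arises, and it would not repair the projection issue in any case.
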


\textbf{Remark.} Notice that there is a small discrepancy between the definition of congestion in \Cref{thmGeneralRouting} and its informal counterpart \Cref{thm:main-informal}. The former talks about fractional routings (e.g., maximum expected congestion), while the later is about integral routings (e.g., expected maximum congestion). However, our formal statement (\Cref{thmGeneralRouting}) implies the informal one, as argued in \Cref{sec:hop-constrained-definitions}.

The rest of the paper is structured as follows. In \Cref{sec:tree-like-routing-impossible} we explain why previous (tree-based) approaches fail to attain oblivious routings with hop constraints. In \Cref{sec:how-does-the-distribution-look-like} we give an overview of the routing scheme that achieves the guarantees of the hop-constrained oblivious routing. Finally, in \Cref{sec:lifting-d1-to-routings} we prove the guarantees of our routing scheme.


\subsection{Tree-based hop-constrained oblivious routings cannot have good guarantees}\label{sec:tree-like-routing-impossible}
   
In this section, we showcase a barrier that prevented prior approaches from achieving $\poly(\log n)$-competitive hop-constrained oblivious routing. State-of-the-art oblivious routings are generally \emph{tree-based routings}\footnote{We also note that routings supported on hierarchically separated trees or HSTs can be converted into tree-based routings with at most a constant loss in their guarantees.}~\cite{racke2008optimal,racke2009survey,bienkowski2003practical,englert2009oblivious,williamson2011design}, i.e., where a demand from $s$ to $t$ is routed by randomly sampling a tree $T$ from a fixed distribution $\calT$ and then picking the tree-defined path $T^G_{u, v}$. We show that tree-based hop-constrained oblivious routings cannot have good guarantees.
\begin{definition}
  A \textbf{complete tree embedding} $(T, T^G)$ of a graph $G = (V(G), \allowbreak E(G))$ consists of a tree $T = (V(T), E(T))$ where $V(G) = V(T)$, and a mapping $T^G$ which maps every edge $e \in E(G)$ to a path in $G$ between $e$'s endpoints.
\end{definition}
\noindent We extend the definition of $T^G_{u, v}$ when $u, v$ are not adjacent in the natural way as in \Cref{def:PartialTreeEmbedding}. Note that the tree from the complete tree embedding is not necessarily a subtree of $G$, but it does contain \emph{all} nodes of $G$.

\begin{definition}
  A \textbf{tree-based routing scheme} $R^\calT$ is a routing scheme that is induced by a distribution over complete tree embeddings $\calT$ in the following way: we sample from $R^\calT_{s, t}$ by sampling an embedding $(T, T^G) \sim \calT$ and returning $T^G_{s, t}$.
\end{definition}


\paragraph{The $\calD\at{1}$ demand.} For a capacitated graph $G = (V, E, c)$ we define a special demand $\calD\at{1}$ which has a unit demand across every edge $e$ (for each unit of capacity), i.e., $\calD\at{1}_{s, t} := \1{ \{s, t\} \in E } \cdot c_{\{s, t\}}$. This demand is particularly important for both the congestion-only and hop-constrained oblivious routings, as we shortly explain. Tree-based routings are especially suitable for constructing oblivious routings: if a tree-based routing incurs congestion $\alpha$ on the $\calD\at{1}$ demand, then it is $\alpha$-congestion-competitive with respect to all possible demands. This greatly simplifies the design of good tree-based routing schemes: one only needs to ensure the single $\calD\at{1}$ is routed in a good manner. The following statement formalizes the claim in the tree-based routing case (note that the discussion up to this point is for the unconstrained-hop setting, hence $h = \infty$).

\begin{restatable}{lemma}{lemmaDToAllTrees}\label{lemma:d1-to-all-trees}
  Let $G = (V, E, c)$ be a capacitated graph and suppose that a tree-based routing scheme $R^\calT = \{R^\calT_{s, t}\}_{s, t \in V}$ achieves $\cong_G(\calD\at{1}, R^\calT) \le \alpha$, where $\calD\at{1} : V \times V \to \mathbb{R}_{\ge 0}$ with $\calD\at{1}_{s, t} := \1{ \{s, t\} \in E } \cdot c_{\{s, t\}}$. Then for every demand $\calD : V \times V \to \mathbb{R}_{\ge 0}$ we have $\cong_G( \calD, R^\calT ) \le \alpha \cdot \opt\at{\infty}(\calD)$.
\end{restatable}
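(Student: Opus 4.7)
The plan is to exploit the hierarchical structure of tree-based routings via a cut-based argument that compares the tree-flow induced by an arbitrary demand $\calD$ against the tree-flow induced by $\calD\at{1}$. The central observation is that for a tree-based routing, the $G$-flow decomposes in a very clean way. For any sampled embedding $(T, T^G)$, routing $\calD$ on $T$ via the unique tree paths $T_{s,t}$ induces a tree-flow $f^T_e(\calD) := \sum_{s,t: e \in T_{s,t}} \calD_{s,t}$ on each tree edge $e \in E(T)$. Since $T^G_{s,t}$ is the concatenation of $T^G_{e_i}$ over the tree edges $e_i \in T_{s,t}$, the $G$-flow satisfies
\begin{equation*}
  \flow(\calD, R^\calT) \;=\; \E_{(T,T^G) \sim \calT}\!\left[ \sum_{e \in E(T)} f^T_e(\calD)\cdot \flow(T^G_e) \right].
\end{equation*}
My first step is to establish this identity formally, which is essentially bookkeeping on the definition of $\flow(\cdot)$ for concatenated paths.

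The heart of the argument is a pointwise comparison between $f^T_e(\calD)$ and $f^T_e(\calD\at{1})$ for every fixed tree $T$ and every tree edge $e \in E(T)$. Removing $e$ partitions $V(T) = V(G)$ into two sides $A_e \sqcup B_e$; by a standard telescoping argument one checks that $f^T_e(\calD)$ equals the total demand crossing the cut $(A_e, B_e)$, while $f^T_e(\calD\at{1})$ equals exactly the total capacity $\sum_{g \in \partial_G(A_e, B_e)} c_g$ of the same cut in $G$. Let $R^\ast$ be any $G$-routing of $\calD$ that achieves congestion $\opt\at{\infty}(\calD)$, and let $f^\ast := \flow(\calD, R^\ast)$, so $f^\ast_g \le \opt\at{\infty}(\calD)\cdot c_g$ for every $g \in E(G)$. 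Since $f^\ast$ feasibly routes $\calD$, the total $G$-flow crossing $(A_e, B_e)$ is at least the total demand crossing it, giving
\begin{equation*}
  f^T_e(\calD) \;\le\; \sum_{g \in \partial_G(A_e,B_e)} f^\ast_g \;\le\; \opt\at{\infty}(\calD) \sum_{g \in \partial_G(A_e,B_e)} c_g \;=\; \opt\at{\infty}(\calD) \cdot f^T_e(\calD\at{1}).
\end{equation*}

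With this inequality in hand, the conclusion follows by plugging back into the decomposition. Since $\flow(T^G_e) \ge 0$ coordinatewise and the inequality $f^T_e(\calD) \le \opt\at{\infty}(\calD) \cdot f^T_e(\calD\at{1})$ holds uniformly in $T$ and $e$, taking the expectation over $(T,T^G) \sim \calT$ yields $\flow(\calD, R^\calT) \le \opt\at{\infty}(\calD) \cdot \flow(\calD\at{1}, R^\calT)$ element-wise. Dividing by $c_g$ edgewise and taking the maximum over $g \in E(G)$ gives
\begin{equation*}
  \cong_G(\calD, R^\calT) \;\le\; \opt\at{\infty}(\calD) \cdot \cong_G(\calD\at{1}, R^\calT) \;\le\; \alpha \cdot \opt\at{\infty}(\calD),
\end{equation*}
as desired.

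The main obstacle, and the place to be careful, is the cut step: one needs that the absolute $G$-flow crossing $(A_e, B_e)$ under $R^\ast$ is at least the absolute demand crossing it, even though $\flow(\cdot) \in \mathbb{R}_{\ge 0}^{E(G)}$ ignores orientation, demands $\calD_{s,t}$ may be asymmetric, and the routing $R^\ast$ is fractional over possibly non-simple paths. The standard fix is to split demand and flow into their two directed components across the cut, apply flow conservation on each side of the cut to a directed version of $R^\ast$ (nonsimple paths contribute nonnegatively in both directions, so this can only help), and then drop the directional bookkeeping in the final inequality. Everything else is a routine identification of cut capacity with the tree-flow of $\calD\at{1}$.
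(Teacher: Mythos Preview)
Your proof is correct and complete, but it takes a genuinely different route from the paper's argument. The paper works \emph{path-wise}: it uses the telescoping inequality $\flow(I(s,t), R^\calT) \le \sum_{i} \flow(I(p_i, p_{i+1}), R^\calT)$, valid for any path $p$ from $s$ to $t$ precisely because tree paths are nested, then averages this over the optimal routing's path distribution and regroups by $G$-edges to land on $\flow(\calD\at{1}, R^\calT)$. You instead work \emph{edge-wise on the tree}: you decompose $\flow(\calD, R^\calT)$ as $\E_{(T,T^G)}\big[\sum_{e\in E(T)} f^T_e(\calD)\,\flow(T^G_e)\big]$ and then bound the tree-load $f^T_e(\calD)$ against $f^T_e(\calD\at{1})$ via a max-flow/min-cut style argument on the bipartition induced by $e$. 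Both arguments ultimately exploit that in a tree the $(s,t)$-path is the unique crossing of every separating edge; your version makes the cut interpretation explicit and connects directly to the classical ``tree-cut capacity $\ge$ demand'' picture from R\"acke's original framework, whereas the paper's version stays closer to the flow-composition machinery used elsewhere in the paper (e.g., in the proofs of \Cref{lemma:d1-router-is-subflow} and \Cref{lemma:gamma-correcting}). One cosmetic point: depending on whether $\sum_{s,t}$ ranges over ordered or unordered pairs, $f^T_e(\calD\at{1})$ may equal twice the cut capacity rather than exactly the cut capacity, but this only adds slack to your key inequality and does not affect the conclusion.
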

\begin{proof}
  The claim is implicit in, e.g., \cite{racke2008optimal}. The full proof is recreated for completeness in \Cref{sec:proof-lemma-d1-to-all-trees}. 
\end{proof}

Unfortunately, hop-constrained oblivious routings with polylogarithmic hop stretch and congestion approximation cannot come from tree-based routings. This observation prevents all prior work for general graphs the authors are aware of from achieving good-quality routings that control both the congestion and dilation.

\begin{lemma}
  There exists an infinite family of graphs with unit capacities and diameter $4$ such that for every graph $G$ in the family the following holds. For every $h \ge 1$, any tree-based $h$-hop oblivious routing $R^\calT$ for $G$ with hop stretch $\beta$ and congestion approximation $\alpha$ has $\alpha \cdot \beta \cdot h \ge \Omega(\sqrt{n})$.
\end{lemma}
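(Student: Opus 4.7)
The strategy is to exhibit a concrete family $\{G_k\}_{k \ge 2}$ of diameter-$4$ unit-capacity graphs on which any tree-based routing provably incurs a multiplicative loss of $\Omega(\sqrt n)$ in the combined measure $\alpha \cdot \beta \cdot h$. The graph $G_k$ I would use is the \emph{hub-and-matching} graph on $n = 2k + 2$ vertices: two hub vertices $u, v$ joined by an edge; two fringe sets $A = \{a_1, \ldots, a_k\}$ and $B = \{b_1, \ldots, b_k\}$; fringe edges $\{u, a_i\}$ and $\{v, b_i\}$; and a matching $\{a_i, b_i\}$. One verifies that $G_k$ has diameter $3 \le 4$, $3k + 1$ edges, and provides $k$ internally-disjoint length-$3$ "petal" paths $u - a_i - b_i - v$ in addition to the direct edge $\{u, v\}$. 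The bad demand will be the edge-uniform demand $\calD := \calD\at{1}$, placing a unit of demand on every edge of $G_k$. Since each pair is itself an edge, the trivial routing gives $\opt\at{h}(\calD) = 1$ for every $h \ge 1$.

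Next, I would fix any tree-based $h$-hop routing $R^\calT$ with hop stretch $\beta$ and congestion approximation $\alpha$, and analyze a single $(T, T^G) \in \supp(\calT)$. A standard centroid argument produces a balanced tree edge $e^\star \in E(T)$ whose removal induces a partition $(V_1, V_2)$ of $V(G_k)$ with $|V_1|, |V_2| = \Omega(k)$. A short case analysis, parametrising by $p := \mathbb{I}[u \in V_1]$, $q := \mathbb{I}[v \in V_1]$, $x := |V_1 \cap A|$, and $y := |V_1 \cap B|$ subject to $|V_1| = x + y + p + q \in [n/3, 2n/3]$, shows that the number of $G$-edges crossing $(V_1, V_2)$ is always at least $k + 1 = \Omega(\sqrt n)$. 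Each such crossing $G$-edge is a demand pair in $\calD\at{1}$ whose tree path crosses $e^\star$, and hence whose induced $G$-route passes through the entire $G$-path $T^G_{e^\star}$; consequently, every $G$-edge lying on $T^G_{e^\star}$ carries flow at least the cross count $\Omega(\sqrt n)$.

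The final step --- and the main obstacle --- is to lift this per-tree concentration to a lower bound on the \emph{expected} congestion of some fixed $G$-edge, since the identity of the overloaded $G$-path $T^G_{e^\star}$ can vary across trees in $\supp(\calT)$. I would handle this by combining two facts: (i) for each $T$, the path $T^G_{e^\star}$ consists of at most $\beta h$ "bad" $G$-edges and every one of them carries $\Omega(\sqrt n)$ flow; and (ii) $T^G_{e^\star}$ must cross the $G$-cut induced by $(V_1, V_2)$, so it necessarily contains at least one edge from a structurally-restricted bottleneck family $B \subseteq E(G_k)$ of size $O(\sqrt n)$ (the natural candidate being the matching edges together with $\{u, v\}$, which together form the only $G$-cut separating the $u$-side from the $v$-side). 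An averaging/pigeonhole argument over $B$ then singles out some $e^\circ \in B$ satisfying $\E_{T \sim \calT}[\text{flow on } e^\circ] = \Omega(\sqrt n / (\beta h))$. Combined with $\opt\at{h}(\calD) = 1$ this gives $\alpha \ge \Omega(\sqrt n / (\beta h))$, hence $\alpha \cdot \beta \cdot h \ge \Omega(\sqrt n)$.

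The hardest part will be executing step (ii): one must carefully argue that across \emph{all} possible balanced tree cuts, the resulting $G$-cuts are concentrated on a common bottleneck family of size $O(\sqrt n)$ rather than being spread uniformly over the $\Omega(\sqrt n)$ edges of $G_k$. If the graph $G_k$ above turns out not to admit such a universal bottleneck cleanly, I would switch to a slightly richer construction (e.g., replacing the single direct edge $\{u, v\}$ by a short chain and coupling the demand more tightly to the matching), or alternatively replace the averaging argument with a Yao-type minimax argument that plays a distribution over "adversarial" demands against the routing distribution $\calT$.
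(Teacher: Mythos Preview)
Your approach has a genuine gap: the hub-and-matching graph $G_k$ does \emph{not} witness the lemma. Here is an explicit tree-based routing on $G_k$ with $\alpha\cdot\beta\cdot h = O(1)$. For each $i\in[k]$ let $T_i$ be the spanning subtree of $G_k$ with edge set $\{\{u,a_j\}:j\in[k]\}\cup\{\{v,b_j\}:j\in[k]\}\cup\{\{a_i,b_i\}\}$ (the double star with the bridge $\{u,v\}$ replaced by the $i$-th matching edge), and take $\calT$ uniform over $\{T_i\}_{i\in[k]}$ with $T^G$ the identity. All tree paths have at most $5$ hops. A direct count shows that for every $G$-edge $e$ the expected flow from routing $\calD^{(1)}$ is at most $3$: e.g.\ the edge $\{u,a_j\}$ carries load $k+1$ in $T_j$ but only $2$ in each $T_i$ with $i\neq j$, giving expected load $<3$. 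By \Cref{lemma:d1-to-all-trees} this routing has congestion approximation $\alpha\le 3$ for all demands (against $\opt^{(\infty)}$, hence also against $\opt^{(h)}$), so for $h=1$, $\beta=5$, we get $\alpha\beta h\le 15$.

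The conceptual point you are missing is that the lemma is \emph{not} about $\calD^{(1)}$. Any tree-based routing that is $\alpha$-competitive on $\calD^{(1)}$ is, by \Cref{lemma:d1-to-all-trees}, automatically $\alpha$-competitive against $\opt^{(\infty)}(\calD)$ for \emph{every} demand $\calD$ --- even though it was only assumed to be competitive against $\opt^{(h)}$. The paper exploits exactly this bootstrapping: it builds a graph and a demand $\calD$ (between endpoints of $\sqrt n$ long parallel paths) with $\opt^{(\infty)}(\calD)=1$, but such that every collection of short paths (length $\le\beta h$) serving $\calD$ is forced through a bottleneck of $O(\beta h)$ edges, yielding congestion $\ge \sqrt n/(\beta h)$. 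Your demand $\calD^{(1)}$ has $\opt^{(\infty)}=\opt^{(h)}=1$, so it cannot exhibit any such gap; and correspondingly your graph lacks the ``long paths vs.\ short bottleneck'' tension that the paper's construction is built around. (There is also a bookkeeping slip in your step~(ii): the matching together with $\{u,v\}$ has $k+1=\Theta(n)$ edges, not $O(\sqrt n)$, so the averaging you propose would only give an $\Omega(1)$ bound anyway.)
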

\begin{proof}
  By definition of hop stretch, we have that $R^\calT$ is supported over paths of length at most $\beta h$.
  Furthermore, due to the congestion approximation being at most $\alpha$, the congestion of $R^\calT$ on the demand $\calD\at{1}$ is $\alpha \cdot \opt\at{h}(\calD\at{1}) = \alpha$. Since $R^\calT$ is a tree-based routing, we conclude via \Cref{lemma:d1-to-all-trees} that for all demands $\calD$ the congestion of $R^\calT$ on $\calD$ is $\alpha$-competitive with $\opt\at{\infty}(\calD)$: $\cong_G( \calD, R^\calT ) \le \alpha \cdot \opt\at{\infty}(\calD)$.
  In other words, $R^\calT$ is also a $\infty$-hop oblivious routing with congestion approximation $\alpha$ (in spite of being supported only on paths of length $\beta h$).

  We now construct a graph $G$ with $n+1$ vertices which exhibits our bound. We take $\sqrt{n}$ paths $p_1, p_2, \ldots, p_{\sqrt{n}}$ of length $\hop(p_i) = \sqrt{n} - 1$. Label the first and last node of $p_i$ with $s_i$ and $t_i$. For each path $p_i$ we connect each $j^{th}$ node to the $j^{th}$ node of $p_1$. Finally, we create a new node $r$ and connect it to all nodes on $p_1$ with edges we call ``uplinks''. The diameter of $G$ is clearly $4$ (e.g., the hop distance between any node and $r$ is $2$). We consider the demand $\calD_{s, t} = \1{\exists i \text{ such that } (s,t) = (s_i, t_i)}$. Clearly, $\opt\at{\infty}(\calD) \le 1$ since the demand between $s_i$ and $t_i$ can be sent across the $(\sqrt{n} - 1)$-hop path $p_i$, resulting in congestion $1$. Furthermore, any path between the start and end of some $p_i$ of hop length at most $\beta h$ must cross one of the first $\beta h$ uplinks. Since there are $\sqrt{n}$ such demands, we conclude that the congestion of at least one of the uplinks is $\frac{\sqrt{n}}{\beta \cdot h} \le \alpha \cdot \opt\at{h}(\calD) = \alpha$.
\end{proof}
\textbf{Remark.} The $\sqrt{n}$ bound can be improved to $\tilde{\Omega}(n)$ for unit-capacity graphs of diameter $O(\log n)$ using  the well-known worst-case network family from~\cite{dassarma2012distributed}. 

\subsection{An overview of the hop-constrained oblivious routing}\label{sec:how-does-the-distribution-look-like}

In this section, we give an overview of our hop-constrained oblivious routing.

While one cannot obtain hop-constrained oblivious routings via tree-based routings (i.e., complete tree embeddings, as argued in \Cref{sec:tree-like-routing-impossible}), we show that distributions over \emph{partial tree embeddings} yield useful results. We first introduce the notion of \textbf{$\calD\at{1}$-routers}, which are distributions over partial tree embeddings $\calT$ and are a crucial building block of hop-constrained oblivious routings. We describe several important aspects of $\calD\at{1}$-routers before we formally define them in \Cref{def:d1-router}.

\begin{enumerate}
\item First, we explain how $\calT$ induces an ``routing scheme'' $R^\calT = \{ R^\calT_{s, t} \}_{s, t}$. For some $s, t \in V(G)$ we sample a partial tree embedding $(T, T^G) \in \calT$; if $s, t \in V(T)$ we return $T^G_{s, t}$; otherwise, we simply return a special symbol $\bot$ which represents ``failure''. Clearly, $R^\calT$ is not a valid routing scheme in the sense of \Cref{def:oblivious-routing} since $\bot \in \supp(R_{s, t})$, but this is somewhat unavoidable when dealing with partial tree embeddings.

\item $\calD\at{1}$-routers for a graph $G$ get their name from being able to route the $\calD\at{1}$ demand, defined as $\calD\at{1}_{s, t} := \1{ \{s, t\} \in E(G) } \cdot c_{\{s, t\}}$. While this does not directly guarantee good congestion approximation on all demands (unlike tree-based routings, c.f. \Cref{lemma:d1-to-all-trees}), it does lead to certain useful properties. First, we formalize what we mean by routing the $\calD\at{1}$ demand over routing scheme $R^\calT$ which is induced by a distribution over partial tree embeddings $\calT$. For a demand $\calD$ we define $\cong_G(\calD, R^\calT)$ as the maximum expected congestion while only routing non-failures:
  \begin{align*}
    \cong_G(\calD, R^\calT) & = \cong_G\left(\sum_{s, t} \E_{p \sim R^\calT_{s,t}}\left[\ \1{p \neq \bot}\cdot \flow(p) \cdot \calD_{s, t}\ \right] \right) \\
                            & = \cong_G\left(\sum_{s, t} \E_{(T, T^G) \sim \calT}\left[\ \1{s, t \in V(T)}\cdot \flow(T^G_{s, t}) \cdot \calD_{s, t}\ \right] \right) .
  \end{align*}
  Finally, we say that a $\calD\at{1}$-router has \textbf{congestion approximation} $\alpha$ if $\cong_G(\calD\at{1}, R^\calT) \le \alpha$.
  
\item In our hop-constrained setting, we need to control the hop length of the paths over which $\calT$ routes. For this reason, we define the \textbf{dilation} of $\calT$ to be $\beta$ if $\calT$ is supported over partial tree embeddings $(T, T^G)$ where $\hop(T^G_{s, t}) \le \beta$ for all $s, t \in V(T)$.
   
\item When talking about distributions over partial tree embeddings, a new parameter called \textbf{exclusion probability} becomes important. The exclusion probability is $\eps$ if for each node $v \in V(G)$ the probability that $v$ is excluded from the tree $(T, \cdot) \sim \calT$ is at most $\eps$. Note that this parameter also appears when talking about partial tree embeddings that approximate hop-constrained distances (\Cref{sec:approx-hop-constrained-distances}).

\item \rev{It is not immediately clear why $\calD\at{1}$-routers are at all useful. For example, using them directly to route some arbitrary demand $\calD$ does not give a competitive congestion, even if it happens that the entire demand $\calD$ is supported on nodes that appear in all the partial trees in the distribution (i.e., there exists $S \subseteq V(G)$ such that $\supp(\calD) \subseteq S \times S$ and $S \subseteq \bigcup_{(T, \cdot) \sim \calT} V(T)$, where $\calT$ is a $\calD\at{1}$-routing distribution). However, Zuzic's dissertation shows that $\calD\at{1}$-routers are sufficiently powerful to obtain near-optimal congestion+dilation routing when combined with several other ideas which aim to control the adversarial congestion caused by the failed (i.e., $\bot$) routes~\cite[Section 7.6: Routing with Noise]{zuzic2020phd}. In this paper we present an approach that corrects the failed routes in a significantly more general way; the only downside of our approach is that it yields larger polylog factors.} 
\end{enumerate} 

We now give a formal definition equivalent to the above description and state its existence lemma.
\begin{definition}\label{def:d1-router}
  A \textbf{$\calD\at{1}$-router} is a distribution over partial tree embeddings $\calT$ on a capacitated graph $G$ that is additionally annotated in the following way:
  \begin{enumerate}
  \item $\calT$ has \textbf{dilation} $\beta \ge 1$ if each partial tree embedding $(T, T^G) \in \supp(\calT)$ has $\hop(T^G_{u, v}) \le \beta$ for all $u, v \in V(T)$.
  \item $\calT$ has \textbf{exclusion probability $\eps > 0$} if for each node $v \in V(G)$ we have $\Pr[v \in V(T)] \ge 1 - \eps$.
  \item $\calT$ has \textbf{congestion $\alpha \ge 1$} if
    \begin{align*}
      \sum_{\{u, v\} \in E(G)} \E_{(T, T^G) \sim \calT}\left[\ \1{u, v \in V(T)} \cdot \flow(T^G_{u, v}) \cdot c_{\{u, v\}}\ \right] \le \alpha \cdot \vec{c}_G.
    \end{align*}
  \end{enumerate}
\end{definition}

\begin{restatable}{lemma}{lemmaDOneRouter}\label{lemmaDOneRouter} 
  For every (complete) capacitated graph $G$ \rev{with polynomially\hyp{}bounded capacities} and $0 < \eps < 1/3$ there exists a $\calD\at{1}$-router with dilation $O(\frac{\log^3 n}{\eps})$, exclusion probability $\eps$, and congestion $O(\log n \cdot \log \frac{\log n}{\eps})$. 
\end{restatable}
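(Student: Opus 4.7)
The plan is to invoke \Cref{thm:hst} with hop parameter $h = 1$ as a subroutine and convert its per-pair expected distance stretch into a maximum-over-edges congestion guarantee via a minimax argument. For any probability distribution $\mu$ on $E(G)$, consider the weighted graph $G_\mu = (V, E, w_\mu)$ with $w_\mu(e) := \mu_e / c_e$. Applying \Cref{thm:hst} to $G_\mu$ with $h = 1$ yields a distribution $\calT_\mu$ over dominating partial tree embeddings with hop stretch $\beta = O(\log^3 n / \eps)$, exclusion probability $\eps$, and expected distance stretch $\alpha = O(\log n \cdot \log(\log n / \eps))$. The choice $h = 1$ is crucial: the hop bound $\hop(T^G_{u, v}) \le \beta$ then directly matches the desired $\calD\at{1}$-router dilation, and for every edge $\{u, v\} \in E(G)$ we have $d_{G_\mu}\at{1}(u, v) = w_\mu(\{u, v\}) = \mu_{\{u, v\}} / c_{\{u, v\}}$.

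Next comes the key calculation. Set $F_e(T) := \sum_{\{u, v\} \in E(G)} c_{\{u, v\}} \cdot \flow(T^G_{u, v})_e \cdot \1{u, v \in V(T)}$, so that the $\calD\at{1}$-routing congestion on edge $e$ is $F_e(T) / c_e$. Swapping the order of summation, using dominance $w_\mu(T^G_{u, v}) \le d_T(u, v)$, and invoking the expected distance stretch of \Cref{thm:hst} give
\begin{align*}
\E_{(T, T^G) \sim \calT_\mu}\!\left[\sum_{e \in E(G)} \mu_e \cdot \frac{F_e(T)}{c_e}\right]
& = \E_{(T, T^G) \sim \calT_\mu}\!\left[\sum_{\{u, v\} \in E(G)} c_{\{u, v\}} \cdot w_\mu(T^G_{u, v}) \cdot \1{u, v \in V(T)}\right] \\
& \le \sum_{\{u, v\} \in E(G)} c_{\{u, v\}} \cdot \alpha \cdot \frac{\mu_{\{u, v\}}}{c_{\{u, v\}}} = \alpha .
\end{align*}
This is the crux of the proof: for every distribution $\mu$ over edges, sampling a tree from $\calT_\mu$ achieves expected $\mu$-weighted congestion at most $\alpha$.

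This ``$\mu$-weighted average'' bound must be boosted to a ``max over edges'' bound. View the situation as a two-player zero-sum game with payoff $\sum_e \mu_e F_e(T) / c_e$: the primal player picks a partial tree $T$ to minimize, the dual player picks a distribution $\mu$ over $E(G)$ to maximize. The key calculation says that for every dual strategy $\mu$, the primal player has a (randomized) response $\calT_\mu$ of expected payoff at most $\alpha$. By Sion's minimax on the convex compact strategy spaces, there exists a single primal distribution $\calT^\star$ over partial trees with $\max_e F_e(\calT^\star) / c_e \le \alpha$; constructively, one can produce such a $\calT^\star = \sum_t w_t \calT_{\mu^{(t)}}$ by running multiplicative weights on the dual side for $O(\log n)$ rounds, with the primal responding by a fresh HST sample each round. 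Either way, because $\calT^\star$ is a convex mixture of the $\calT_\mu$'s, it inherits hop stretch $\le \beta$ and per-node exclusion probability $\le \eps$ by linearity of expectation, delivering a $\calD\at{1}$-router with exactly the claimed parameters.

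The main obstacle is executing the boosting step precisely: one must verify the convexity-compactness hypotheses of Sion's theorem on the strategy spaces of HST-derivable distributions, or, in the multiplicative-weights route, carefully normalize the losses $F_e/c_e$ (using the polynomial aspect ratio of $G$) so that the additive MW regret is absorbed into the final $O(\alpha)$ bound, while also checking that forming a convex combination of HST output distributions preserves the per-node exclusion and per-pair dilation of the individual components.
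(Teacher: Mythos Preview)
Your proposal is correct and follows essentially the same approach as the paper: the paper writes an explicit LP over distributions satisfying the dilation and exclusion constraints, dualizes it, and for each dual variable $\ell$ constructs exactly your weighted graph $G'(\ell)$ with $w(e)=\ell_e/c_e$ and invokes \Cref{thm:hst} with $h=1$ to produce a good primal response---your Sion/minimax framing is the same argument in different clothing. The one technical point the paper handles that you leave implicit is adding a tiny additive $n^{-C-2}$ to each weight so that \Cref{thm:hst}'s polynomially-bounded-weights hypothesis is met even when some $\mu_e=0$; this extra term contributes at most $\sum_e c_e \cdot n^{-C-2} \le 1$ to the final bound and is absorbed.
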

\begin{proof}
  We write a linear program over $\calD\at{1}$-routers with the goal of minimizing the congestion $\alpha$ while satisfying the dilation and exclusion probability properties. Let $(T_1, T_1^G), (T_2, T_2^G), \ldots, (T_Q, T_Q^G)$ be the (finite) set of possible partial tree embeddings of $G$ satisfying the dilation property, i.e., where $\hop(T^G_{u, v}) \le O(\frac{\log^3 n}{\eps})$ for all $u, v \in V(T)$. Given a vector $\lambda \in \{ x \in \mathbb{R}_{\ge 0}^Q \mid \sum_j x_j = 1 \}$ we denote with $\calT(\lambda)$ the distribution over $\{(T_i, T_i^G)\}_i$ where $\Pr[\calT(\lambda) = (T_i, T_i^G)] = \lambda_i$. Furthermore, let $\Lambda$ be the set of vectors $\lambda = (\lambda_i)_i$ satisfying the exclusion probability property, i.e., $\Pr_{(T, \cdot) \sim \calT(\lambda)}[ v \in V(T) ] \ge 1 - \eps$ for all $v \in V(G)$. \rev{Note that $\Lambda$ is a convex polytope.}

  We now present the linear program. Note that $\cong_G(x) \le \alpha$ can be written as $x_e \le \alpha \cdot c_e$ where $c_e$ is the capacity of an edge $e$ in $G$.
  \begin{align*}
    \min_{\alpha, \lambda}. & \quad \alpha \\
    \text{such that} & \quad \lambda \in \Lambda, \text{ and} \\
    \forall e \in E(G) & \quad \sum_{\{u, v\} \in E(G)} c_{\{u, v\}} \cdot \E_{(T, T^G) \sim \calT(\lambda)}\left[ \1{u, v \in V(T)} \cdot \flow(T^G_{u, v})_e \right] \le \alpha \cdot c_e
  \end{align*}

  We dualize the linear program. Note that the primal can be written as $\min \{ \alpha \mid M\lambda \le \alpha, \lambda \in \Lambda \}$ for an appropriately chosen matrix $M$. Therefore, we use the dualization formula
  \begin{align*}
    \min \{ \alpha \mid M\lambda \le \alpha, \lambda \in \Lambda \} = \max_{\ell \ge 0, 1^T \ell = 1} \quad \min_{\lambda \in \Lambda} \quad \ell^T M \lambda .
  \end{align*}
  Using these values, we rewrite the right-hand side of the equation.
  \begin{align*}
    \ell^T M \lambda & = \sum_{e \in E(G)} \ell_e \cdot \E_{(T, T^G) \sim \calT(\lambda)}\left[ \sum_{\{u, v\} \in E(G)} c_{\{u, v\}} \cdot \1{u, v \in V(T)} \cdot \flow(T^G_{u, v})_e \right] / c_e \\
                     & = \sum_{\{u, v\} \in E(G)} c_{\{u, v\}} \cdot \E_{(T, T^G) \sim \calT(\lambda)}\left[ \1{u, v \in V(T)} \sum_{e \in E(G)} \ell_e / c_e  \cdot \flow(T^G_{u, v})_e \right] \\
                     & = \sum_{\{u, v\} \in E(G)} c_{\{u, v\}} \cdot \E_{(T, T^G) \sim \calT(\lambda)}\left[ \1{u, v \in V(T)} \cdot w_{G'(\ell)}(T^G_{u, v}) \right]
  \end{align*}
  In the last line, we introduced a new weighted graph $G'(\ell)$ that is defined as having the same node set and edge set as $G$, while its weights $w_{G'(\ell)}$ are set to $w_{G'(\ell)}(e) = \ell_e / c_e + n^{-C-2} \ge 0$, \rev{where $n^{-C} \le c_e \le n^C$ is the capacity aspect ratio of $G$}. \rev{We note that adding $n^{-C-2}$ to $w_{G'(\ell)}$ is simply so that we can apply \Cref{thm:hst} which requires us to have polynomially-bounded weights on $G'(\ell)$.} With this, we present the dual:
  \begin{align*}
    \max_{\beta, \ell}. & \quad \beta \\
    \text{such that} & \\    
    \forall \lambda \in \Lambda & \quad \sum_{\{u, v\} \in E(G)} c_{\{u, v\}} \cdot \E_{(T, T^G) \sim \calT(\lambda)}\left[ \1{u, v \in V(T)} \cdot w_{G'(\ell)}(T^G_{u, v}) \right] \ge \beta \\
    & \quad \ell \ge 0, \sum_{e \in {E(G)}} \ell_e = 1
  \end{align*}
  
  By inspecting the dual, we see that in order to show that the optimal value of the linear program is at most $\beta$, it is sufficient to show that for every distribution $( \ell_e )_{e \in {E(G)}}$ there exists a $\lambda \in \Lambda$ where $\ell^T M \lambda \le \beta$. To this end, fix any $(\ell_e)_e$ and consider $G'(\ell)$ as defined above. Via \Cref{thm:hst}, there exists a $1$-hop partial embedding distribution $\calT'$ with exclusion probability $\eps$, hop stretch $O(\frac{\log^3 n}{\eps})$, and expected distance stretch $O(\log n \cdot \log \frac{\log n}{\eps})$, i.e.,
  $$\E_{(T, T^G) \sim \calT'}\left[ \1{u, v \in V(T)} \cdot w_{G'(\ell)}(T^G_{u, v}) \right ] \le w_{G'(\ell)}(\{u, v\}) \cdot O(\log n\cdot \log \frac{\log n}{\eps}).$$
  Note that $\calT'$ can be represented as $\calT(\lambda')$ for some $\lambda' \in \Lambda$ since the distribution satisfies the exclusion property and each embedding in the support satisfies the dilation properties (due to the hop stretch). Therefore, for $\lambda = \lambda'$ we have:
  \begin{align*}
    \ell^T M \lambda = & \sum_{\{u, v\} \in E(G)} c_{\{u, v\}} \cdot  \E_{(T, T^G) \sim \calT(\lambda')}\left[ \1{u, v \in V(T)} \cdot w_{G'(\ell)}(T^G_{u, v}) \right] \\ 
    & \le \sum_{\{u, v\} \in E(G)} c_{\{u, v\}} \cdot w_{G'(\ell)}(\{u, v\}) \cdot O(\log n \cdot \log \frac{\log n}{\eps}) \\
    & = O(\log n \cdot \log \frac{\log n}{\eps}) \sum_{\{u, v\} \in E(G)} c_{\{u, v\}} \cdot [\frac{\ell_{\{u, v\}}}{c_{\{u, v\}}} + \rev{\frac{1}{n^{C+2}}}] \\
    & \le O(\log n \cdot \log \frac{\log n}{\eps})[1 + 1] = O(\log n \cdot \log \frac{\log n}{\eps}) .
  \end{align*}

  In other words, we conclude that the optimal value of the linear program is at most $O(\log n \cdot \log \frac{\log n}{\eps})$, showing that there exists a distribution over partial tree embeddings of $G$ that are a $\calD\at{1}$-router with dilation $O(\frac{\log^3 n}{\eps})$ (implied by $\lambda \in \Lambda$), exclusion probability $\eps$ (implied by $\lambda \in \Lambda$), and congestion $O(\log n \cdot \log \frac{\log n}{\eps})$ (optimal linear program value).
\end{proof}

\medskip

Having constructed $\calD\at{1}$-routers, the next and final step is to ``lift'' them into a proper hop-constrained oblivious routing. In this section, we aim only to give an overview, hence we will only show the sampling algorithm and defer arguing about its correctness to \Cref{sec:lifting-d1-to-routings}. \Cref{alg:sampling-hop-constrained-routings} shows how to sample the $h$-hop oblivious routing $R = \{ R_{s, t} \}_{s, t \in V}$ that satisfies the constraints of our main result, \Cref{thmGeneralRouting}. 

\begin{algorithm}
  \caption{Sample a path $p \sim R_{s, t}$, given a graph $G$, $s, t \in V(G)$, and a hop constraint $h \ge 1$.}
  \label{alg:sampling-hop-constrained-routings}
  \begin{algorithmic}[1]%
    \STATE Create a ``completion'' $H = (V, \binom{V}{2}, c_H)$ of $G = (V, E, c_G)$.
    \bindent
    \STATE $c_H(e) := c_G(e)$ if $e \in E$,
    \STATE $c_H(e) := n^{-O(1)} \cdot \min_{e \in e} c_G(e)$ otherwise.
    \eindent
    \STATE Let $\calT_1$ be a $\calD\at{1}$-router on $H$ with exclusion probability $\eps_1 = 1/(4h)$ 
    \STATE Sample $r := O(\log n)$ trees $T_1, T_2, \ldots, T_r \sim \calT_1$ conditioned on $s, t \in V(T_i)$.
    \STATE Assign $q_1 := (T_1)^G_{s, t}, q_2 := (T_2)^G_{s, t}, \ldots, q_r := (T_r)^G_{s, t}$.
    \STATE Let $\calT_2$ be a $\calD\at{1}$-router on $H$ with exclusion probability $\eps_2 = 1 / O(h \log^4 n)$.
    \STATE Sample a tree $(F, F^G) \sim \calT_2$ and let $p := F_{s, t}^G$. \label{line:sample}
    \bindent
    \STATE Simplify $p$ by eliminating all cycles.
    \STATE Repeat the sampling of $F$ and $p$ if $\bigcup_{i=1}^r V(q_i) \not \subseteq V(F)$.
    \STATE Repeat the sampling of $F$ and $p$ if $E(p) \not \subseteq E(G)$.
    \eindent
    \STATE Return $p$.
  \end{algorithmic}
\end{algorithm}

We note that the final routing scheme $R$ is a conditional distribution induced by partial tree embeddings $\calT_2$, conditioned on (1) the sampled tree $F$ containing the nodes of $O(\log n)$ sampled paths from $\calT_1$, and (2) the sampled path using only edges in $G$ (i.e., not using virtual edges constructed during the completion of $G$).

\section{Lifting the $\calD\at{1}$-router to a Hop-Constrained Oblivious Routing}\label{sec:lifting-d1-to-routings}

In this section, we describe and prove how to construct hop-constrained oblivious routings satisfying \Cref{thmGeneralRouting} from $\calD\at{1}$-routers. From a high-level, we first show that $\calD\at{1}$-routers can be used to route other demands $\calD \neq \calD\at{1}$ if one allows for a constant fraction of \rev{\emph{hidden}} failures (\Cref{sec:subflow-routing}). \rev{The failures are hidden in the sense that they adaptively depend on the specific demand $\calD$ and there is no simple way to discern the failed routes from the non-failed ones.} Next, we show a method of ``correcting'' the number of failures down to $n^{-O(1)}$ (i.e., an arbitrarily small polynomial fraction) with a polylogarithmic increase in congestion approximation and hop stretch guarantees (\Cref{sec:correcting}). Finally, we show how to eliminate failures entirely and extend our results to non-complete graphs (\Cref{sec:putting-it-together}).

\subsection{Hop-constrained Subflow Routing}\label{sec:subflow-routing}

In this section, we argue that $\calD\at{1}$-routers are indeed useful for demands $\calD \neq \calD\at{1}$. This is not unexpected: for tree-based routings in the congestion-only setting, a good $\calD\at{1}$-router immediately gives a good oblivious routing (\Cref{lemma:d1-to-all-trees}). However, when dealing with distributions over partial tree embedding, one needs to take special care of failures $\bot$ that can arise when nodes are missing from the partial tree embeddings.

We introduce the concept of \emph{subdistributions}. Suppose that we have a random variable $x$ which, sometimes, produces an unusable result. To model this, we introduce another random variable $y$, which can either be equal to $x$ (in case of success) or be $\bot$ in case of failures. Moreover, the probability of failure is controlled. The distributions of such variables $x$ and $y$ satisfy the following relation.
\begin{definition}
  Let $X$ be a distribution over a set $U$. A distribution $Y$ over $U \cup \{\bot\}$ is a $\gamma$-\textbf{subdistribution} of $X$ if for all $u \in U$ it holds that $\Pr[Y = u] \le \Pr[X = u]$ and $\Pr[Y = \bot] \le \gamma$.
\end{definition}
An equivalent definition of $Y$ being a subdistribution of $X$ is to say that we can construct a probability space with random variables $x \sim X$ and $y \sim Y$ such that $y \in \{ x, \bot \}$.

To simplify notation, in this section we will often conflate a distribution $X$ and a random variable $x \sim X$. Naturally, one has to be careful in doing so since defining random variables requires defining a probability space. Sometimes there is no ambiguity about how to properly formalize the space (e.g., \Cref{def:subflow}, where linearity of expectation makes differences immaterial). However, in places where it matters, we will be careful to make the space clear from the context.

We now define the main concept of this section: subflow routing. Intuitively, a subflow routing is a routing scheme $R = \{ R_{s, t} \}_{s, t \in V(G)}$ where we allow some paths to ``fail''. Similar to $\calD\at{1}$-routers, the paths that fail are not counted towards the congestion. Moreover, these failures can be adaptive to the demand, i.e., they are demand dependent, but the fraction of failures must be tightly controlled by a new parameter $0 < \gamma < 1$.
\begin{definition}\label{def:subflow}
  An \textbf{$h$-hop $\gamma$-subflow routing} with \textbf{congestion approximation} $\alpha \ge 1$ for a graph $G = (V, E)$ is a routing scheme $R = \{R_{u, v}\}_{u, v\in V}$ with the following property. For every demand $\calD : V \times V \to \mathbb{R}_{\ge 0}$ there exists a ``routing scheme with failures'' $R' = R'(\calD) = \{ R'_{s, t} \}_{s, t \in V}$ where $R'_{s, t}$ is an $\gamma$-subdistribution of $R_{s, t}$ and
  $$\cong_G( \sum_{s, t \in V} \E\left[\ \1{R'_{s, t} \neq \bot} \cdot \flow(R'_{s, t}) \cdot \calD_{s, t}\ \right] ) \le \alpha \cdot \opt\at{h}(\calD) .$$
  Additionally, we say that $R$ has \textbf{hop stretch} $\beta$ if for all $s, t \in V$ all paths $p \in \supp(R_{s, t})$ have $\hop(p) \le \beta h$.
\end{definition}

While in this paper we do not focus much on the computational aspects, we will note that the user of an ($h$-hop) $\gamma$-subflow routing cannot differentiate between failures and non-failures (otherwise they could simply route the non-failures, obtain the advertised $\alpha$-approximate congestion for a $1 - \gamma$ fraction of the demands, and then repeat $O_\gamma(\log n)$ times). The crux of this section is a method to \emph{obliviously} boost $\gamma$ down to $n^{-O(1)}$ (\Cref{sec:correcting}). We can then entirely eliminate $\gamma$ (\Cref{sec:putting-it-together}). We emphasize the oblivious part since the final hop-constrained oblivious routing makes no mention of failures (e.g., see \Cref{alg:sampling-hop-constrained-routings} or \Cref{thmGeneralRouting}). We now show that $\calD\at{1}$-routers are indeed $1/2$-subflow routings.

\begin{lemma}[$\calD\at{1}$-routers are $1/2$-subflow routings]\label{lemma:d1-router-is-subflow}
  Let $\calT$ be a $\calD\at{1}$-router on $G$ with dilation $\beta h$, exclusion probability at most $1 / (4h)$, and congestion $\alpha \ge 1$. Let $R^\calT$ be the routing scheme induced by $\calT$ where the distribution $R^\calT_{s, t}$ corresponds to sampling $(T, T^G) \sim \calT$ and returning $T^G_{s, t}$ if $s, t \in V(T)$ and an arbitrary path otherwise. Then $R^\calT$ is an $h$-hop $\frac{1}{2}$-subflow routing with congestion approximation $\alpha$ and hop stretch $\beta$.
\end{lemma}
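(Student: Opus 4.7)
The plan is to exhibit, for each demand $\calD$, a routing $R'$ that couples with $\calT$ as follows: sample $(T, T^G) \sim \calT$ and set $R'_{s,t} := T^G_{s,t}$ when $s, t \in V(T)$, and $R'_{s,t} := \bot$ otherwise. Since this matches the output of $R^\calT_{s,t}$ on the event $\{s, t \in V(T)\}$, the distribution $R'_{s,t}$ is a subdistribution of $R^\calT_{s,t}$. By the exclusion probability hypothesis and a union bound, the failure probability is at most $\Pr[s \notin V(T)] + \Pr[t \notin V(T)] \le 2 \cdot \tfrac{1}{4h} \le \tfrac{1}{2}$, making $R'_{s,t}$ a $\tfrac{1}{2}$-subdistribution, and the hop stretch $\beta$ is immediate from $\hop(T^G_{s,t}) \le \beta h$ for all $s, t \in V(T)$, which is the dilation hypothesis on $\calT$.

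The substantive step is bounding the congestion of the flow $f' := \sum_{s, t} \calD_{s, t} \cdot \E_{(T, T^G) \sim \calT}[\1{s, t \in V(T)} \cdot \flow(T^G_{s,t})]$ by $\alpha \cdot \opt\at{h}(\calD)$ coordinate-wise. The approach mirrors R\"acke's tree-cut argument used for \Cref{lemma:d1-to-all-trees}. Fix a target edge $f$ and use that $T^G_{s, t}$ is the concatenation of $T^G_{x, y}$ along tree-path edges, then swap the order of summation to obtain
\begin{align*}
f'_f = \sum_T \Pr[T] \sum_{\{x,y\} \in E(T)} \flow(T^G_{x,y})_f \cdot D^T_{x,y},
\end{align*}
where $D^T_{x, y}$ is the total $\calD$-demand between $V(T)$-pairs on opposite sides of the tree cut at $\{x, y\}$. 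Since $\calD$ is routable in $G$ with congestion $\opt\at{h}(\calD)$, max-flow/min-cut upper bounds $D^T_{x, y}$ by $\opt\at{h}(\calD)$ times the capacity of a $G$-cut separating the two sides; after an appropriate choice of this cut, the double sum for $f'_f$ rearranges into exactly the left-hand side of the $\calD\at{1}$-router congestion inequality, which is bounded by $\alpha \cdot c_f$.

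The main obstacle I expect is the partial nature of the embedding: because $V(T) \subsetneq V(G)$, tree cuts do not automatically induce $G$-cuts, and nodes in $V(G) \setminus V(T)$ must be assigned to one side, potentially contributing capacity that is not captured by the $V(T)$-indexed sum in the $\calD\at{1}$-router expression. Handling this cleanly is precisely why \Cref{alg:sampling-hop-constrained-routings} runs the $\calD\at{1}$-routers on the completion $H$ rather than on $G$ directly: on such a host graph one can extend any tree cut in a way that preserves the $V(T)$-induced partition, and the virtual-edge contributions can be made polynomially small so they do not inflate the bound beyond the guarantee provided by the $\calD\at{1}$-router hypothesis.
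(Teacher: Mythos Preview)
Your diagnosis of the obstacle is right, but the proposed cure is not. The completion $H$ assigns tiny capacity only to pairs $\{u,v\}\notin E(G)$; it does nothing about a genuine edge $\{u,v\}\in E(G)$ when one endpoint happens to lie outside $V(T)$ for the sampled tree. When you extend a tree cut $(A,B)$ on $V(T)$ to a $G$-cut $(A',B')$, the cut capacity picks up all such edges incident to $V(G)\setminus V(T)$, and these are real $c_G$-weighted edges about which the $\calD\at{1}$-router hypothesis says nothing (its congestion inequality sums only over $\{u,v\}$ with both $u,v\in V(T)$). After your swap the extra terms do not collapse to anything the hypothesis controls, and the bound does not reduce to $\alpha\cdot c_f$. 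The issue is intrinsic to partial trees, not to the host graph; at root, your demand-independent $R'$ declares too few failures and hence tries to account for too much flow.

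The paper's fix is to make $R'$ depend on $\calD$. Fix an optimal $h$-hop witness routing $P^*$ for $\calD$; for each $(s,t)$, jointly sample a witness path $p\sim P^*_{s,t}$ and $(T,T^G)\sim\calT$, and set $R'_{s,t}:=T^G_{s,t}$ only when \emph{every} node of $p$ lies in $V(T)$ (else $\bot$). The failure probability is then at most $(h{+}1)/(4h)\le 1/2$ by a union bound over the $\le h{+}1$ nodes of $p$---this, not the two-endpoint bound $2/(4h)$, is where the $1/(4h)$ exclusion hypothesis is actually needed. On success all intermediate nodes $v_0,\dots,v_\ell$ of $p$ are in $V(T)$, so $\flow(T^G_{s,t})\le\sum_i\flow(T^G_{v_i,v_{i+1}})$ with each $\{v_i,v_{i+1}\}\in E(G)$ having both endpoints in $V(T)$. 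Summing over $(s,t)$ and swapping, edge $\{u,v\}$ is charged at most $\opt\at{h}(\calD)\cdot c_{\{u,v\}}\cdot\E[\1{u,v\in V(T)}\flow(T^G_{u,v})]$, which is exactly what congestion $\alpha$ of the $\calD\at{1}$-router bounds. Routing along the witness path, rather than through tree cuts, is what makes the partial-tree charging close.
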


We remark that our definition of $R^\calT_{s, t}$ sometimes samples a tree $T$ such that $s \not \in V(T)$ or $t \not \in V(T)$, in which case we return an arbitrary path. This is because, by definition, $R^\calT_{s, t}$ must always return a path. However, the definition of subflow routing allows the later analysis to ignore such paths. Specifically, our analysis will draw a $\bot$ in such cases.

\begin{proof}
  For simplicity of notation, we extend the definition of $\flow$ (which maps paths to their flows in $\mathbb{R}_{\ge 0}^{E(G)}$) to $\flow(\bot) = \vec{0}$. Due to this, we can replace $\1{p \neq \bot} \cdot \flow(p) = \flow(p)$.

  Clearly, the dilation property of $\calD\at{1}$-routers implies that each path $p \in \supp(T^G_{s, t})$ has $\hop(p) \le \beta h$, hence the hop stretch property of $R^\calT$ is immediate.

  \medskip

  \rev{\textbf{Proof sketch.} Fix a demand $\calD$ and consider some ($\calD$-dependent) optimal ``witness'' solution $P^*$. Suppose, for simplicity, that each demand pair $(s, t) \in \supp(\calD)$ is routed along a single path and that the paths are edge-disjoint (hence $\opt\at{h}(\calD) = 1$). For some demand pair $(s, t)$ we sample a partial tree $(T, T^G)\sim \calT$ and consider its witness path $P^*_{s, t}$. If all nodes of $P^*_{s, t}$ are in $V(T)$, then we label this a ``success'' and route $s$ to $t$ via $T^G_{s, t}$. On the other hand, if some node is not in $V(T)$, this is a ``failure'' and no congestion is incurred. However, using a union bound and exclusion probability, the probability of failure is at most $\gamma = \frac{1}{4h} \cdot (\hop(P^*_{s, t}) + 1) \le 1/2$. The only remaining thing to argue is the cumulative congestion of all successful routes. If the demand pair $(s, t)$ was successfully routed, then by definition all nodes $\{v_i\}_{i=0}^h$ of the path $P^*_{s, t} = (s = v_0, v_1, \ldots, v_h = t)$ are in $V(T)$. Hence, routing $T^G_{s, t}$ incurs less congestion than cumulatively routing $T^G_{v_0, v_1}, T^G_{v_1, v_2}, \ldots, T^G_{v_{h-1}, v_h}$. In other words, routing $(s, t)$ cannot be worse than routing each edge on the witness path. Performing the same argument over all pairs $(s, t)$ in the demand of $\calD$ it follows that routing the entire demand cannot be worse than routing all edges in the set of witness paths. However, each edge appears at most once in the set of witness paths (since we assumed $\opt\at{h}(\calD) \le 1$), hence all successful routes incur congestion at most $\alpha$---this is exactly the property of $\calD\at{1}$-routers. The argument extends by linearity to larger values of $\opt\at{h}(\calD)$ and convex combinations of $P^*_{s, t}$.}

  \medskip
  
  \textbf{Construction of subdistributions.} Fix a demand $\calD : V(G) \times V(G) \to \mathbb{R}_{\ge 0}$. By definition, there exists a ``witness'' routing scheme $P^* = \{ P^*_{s, t} \}_{s, t \in V(G)}$ that certifies the optimal $h$-hop routing solution. In other words, the support of $P^*_{s, t}$ is over $h$-hop paths connecting $s$ and $t$, and $\cong_G(\sum_{s, t \in V(G)} \E[ \calD_{s, t} \cdot \flow(P^*_{s, t}) ] ) = \opt\at{h}(\calD)$. Equivalently, expanding the definition of $\cong_G$, for every $\{u, v\} \in E(G)$:
  \begin{align}
    \sum_{s, t \in V(G)} \E[ \calD_{s, t} \cdot \flow(P^*_{s, t})_{\set{u, v}} ] \le \opt\at{h}(\calD) \cdot c_{\set{u, v}} . \label{eq:edge-cong}
  \end{align}
  We now construct the collection of subdistributions $\{ R'_{s, t} \}_{s, t}$. First, fix $s, t \in V(G)$. Then independently (of $P^*$) sample $(T, T^G) \sim \calT$. Consider the event $V(P_{s, t}^*) \subseteq V(T)$, namely, that all nodes of a path $p = P_{s, t}^*$ are in $V(T)$, i.e., $V(p) \subseteq V(T)$. If $V(P_{s, t}^*) \subseteq V(T)$ then we assign $R'_{s, t} = T^G_{s, t}$ and otherwise $R'_{s, t} = \bot$. It is clear that $\Pr[R'_{s, t} = p] \le \Pr[R_{s, t} = p]$ for every $p$ and every $s, t$. Furthermore, using the exclusion probability and a union bound, $\Pr[R'_{s,t} = \bot] = \Pr[V(P_{s, t}^*) \subseteq V(T)] \le \frac{\hop(P^*_{s,t}) + 1}{4h} \le \frac{h + 1}{4h} \le \frac{1}{2}$. Therefore, $R'_{s, t}$ is a $\frac{1}{2}$-subdistribution of $R^\calT_{s, t}$.

  \medskip
  
  \textbf{Subdistribution congestion analysis.} For the sake of the analysis, for each $s, t \in V(G)$ we also introduce a random flow variable $C_{s, t} \in \mathbb{R}_{\ge 0}^{E(G)}$ (in the same probability space as above) as follows. In the event $V(P_{s, t}^*) \subseteq V(T)$ we consider the nodes on the path $P_{s, t}^* = (s = v_0, v_1, \ldots, v_{\ell-1}, v_{\ell} = t)$ and assign
  \begin{align*}
    C_{s,t} := \1{ V(P^*_{s, t}) \subseteq V(T) } \cdot \sum_{i=0}^{\ell-1} \flow(T^G_{v_i, v_{i+1}}) \ge  \1{ V(P^*_{s, t}) \subseteq V(T) } \cdot \flow(T^G_{s, t}) .
  \end{align*}
  Note that the right-hand side inequality holds because in any partial tree embedding $\flow(T^G_{a, c}) \le \flow(T^G_{a, b}) + \flow(T^G_{b, c})$ for all $a, b, c \in V(T)$. Therefore,
  \begin{align}
    \sum_{s, t} \E[ \flow(R'_{s, t}) \cdot \calD_{s, t} ] = \sum_{s, t} \E\left[ \1{ V(P^*_{s, t}) \subseteq V(T) } \flow(T^G_{s, t}) \cdot \calD_{s, t} \right] \le \sum_{s, t} \E[ C_{s, t} ] \cdot \calD_{s, t}  . \label{eq:C-lb}
  \end{align}

  On the other hand, we now give an upper bound for $\sum_{s, t} \E[ C_{s, t} \cdot \calD_{s, t} ]$. \rev{We note that in the following, the event $V(P^*_{s, t}) \subseteq V(T)$ implies that all intermediate nodes used by $P^*_{s, t}$ are in $T$.}
  \begin{align*}
    \sum_{s, t}\ & \E[ C_{s, t} \calD_{s, t} ]= \sum_{s, t} \calD_{s, t} \cdot \E\left[ \1{ V(P^*_{s, t}) \subseteq V(T) } \cdot \sum_{i=0}^{\ell-1} \flow(T^G_{v_i, v_{i+1}}) \right] \\
    & = \sum_{s, t} \calD_{s, t} \cdot \E\left[ \1{ V(P^*_{s, t}) \subseteq V(T) } \cdot \sum_{\{u, v\} \in E(G)} \flow(P^*_{s, t})_{\{u, v\}} \cdot \flow(T^G_{u, v}) \right] \allowdisplaybreaks \\
    & = \sum_{\{u, v\} \in E(G)} \E\left[\flow(T^G_{u, v}) \cdot \sum_{s, t} \calD_{s, t} \cdot \1{ V(P^*_{s, t}) \subseteq V(T) } \cdot  \flow(P^*_{s, t})_{\{u, v\}} \right] \allowdisplaybreaks \\
    & \le \sum_{\{u, v\} \in E(G)} \E[\1{ u, v \in V(T) } \cdot \flow(T^G_{u, v})] \cdot \sum_{s, t} \calD_{s, t} \cdot \E\left[ \flow(P^*_{s, t})_{\{u, v\}} \right] \allowdisplaybreaks \\
    & \le \sum_{\{u, v\} \in E(G)} \E[\1{ u, v \in V(T) } \cdot \flow(T^G_{u, v})] \cdot \opt\at{h}(\calD) \cdot c_{\set{u, v}} \qquad \text{\Cref{eq:edge-cong}} \\
    & \rev{= \opt\at{h}(\calD) \cdot \sum_{\{u, v\} \in E(G)} \E[\1{ u, v \in V(T) } \cdot \flow(T^G_{u, v}) \cdot c_{\set{u, v}}]} \\
    & \le \opt\at{h}(\calD) \cdot \alpha \cdot \vec{c_G} \qquad \text{\rev{(Property 3 of \Cref{def:d1-router})}}
  \end{align*}
  Combining the above with \Cref{eq:C-lb} we get that $\cong_G(\sum_{s, t} \E[ \flow(R'_{s, t}) \cdot \calD_{s, t} ]) \allowbreak \le \alpha \cdot \opt\at{h}(\calD)$.
\end{proof}

\subsection{Correcting subflow failures}\label{sec:correcting}

In this section, we show how to drive down the failure bound $\gamma$ to $\gamma^r$ \rev{(e.g., from $1/2$ to $n^{-C}$, for any constant $C > 0$ by setting $r := O(\log n)$)}. Formally, suppose that $R$ is a $\gamma$-subflow routing and fix $s, t \in V(G)$. We sample $r$ paths $\{q_i\}_{i=1}^r$ from $R_{s, t}$ and let $\calT$ be a $\calD_1$-router with exclusion probability at most $( 2 \sum_{i=1}^r \hop(q_i) )^{-1}$. We sample a single tree embedding $(T, T^G) \sim \calT$ conditioned on $V(T)$ containing all nodes of $\{q_i\}_{i=1}^r$, i.e., $\bigcup_{i=1}^r V(q_i) \subseteq V(T)$. The sampled path in our new routing scheme is then $F_{s, t} := T^G_{s, t}$: we claim $F$ is an $\gamma^r$-subflow routing (see \Cref{alg:sampling-hop-constrained-routings}, lines 5--12, ignore lines 9 and 11 which come from \Cref{sec:putting-it-together}). 

\begin{lemma}[Reducing $\gamma$]\label{lemma:gamma-correcting}
  Given an $h$-hop $\gamma$-subflow routing $R$ with hop stretch $\beta$ and congestion approximation $\alpha$ for a (complete) capacitated graph $G$, there exists an $h$-hop $(\gamma^r)$-subflow routing $F$ with hop stretch $O(r \beta \log^3 n)$ and congestion approximation $O(\frac{\alpha}{1 - \gamma} \cdot \log n \cdot \log(r h \beta \log n))$, for every integer $r \ge 2$.
\end{lemma}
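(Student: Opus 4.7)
My plan is to instantiate $F$ exactly as described in Lines 5--8 and 10 of \Cref{alg:sampling-hop-constrained-routings}: to sample $F_{s,t}$, draw $r$ independent paths $q_1,\dots,q_r \sim R_{s,t}$, invoke \Cref{lemmaDOneRouter} to obtain a $\calD\at{1}$-router $\calT$ on $G$ with exclusion probability $\eps = \Theta(1/(r\beta h))$, and draw $(T, T^G) \sim \calT$ conditioned on the event $B := \{\bigcup_i V(q_i) \subseteq V(T)\}$, returning $T^G_{s,t}$. Since each $q_i$ contains at most $\beta h + 1$ nodes, a union bound gives $\Pr[B \mid q_1,\dots,q_r] \ge 1/2$, so the conditioning is well-defined and paying to remove it costs at most a factor $2$ in any non-negative expectation. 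The hop stretch is then immediate: $\hop(T^G_{s,t})$ is bounded by the dilation $O(\log^3 n/\eps) = O(r\beta h \log^3 n)$ of $\calT$, yielding hop stretch $O(r\beta\log^3 n)$.

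For the subflow guarantee, fix a demand $\calD$ and let $R' = R'(\calD)$ be the witness subdistribution guaranteed by the $\gamma$-subflow property of $R$. Couple every draw of $q_i$ with $q'_i \in \{q_i, \bot\}$ distributed as $R'_{s,t}$, taking the $r$ couplings to be mutually independent. Define the subsample $F'_{s,t}$ to equal the sampled $T^G_{s,t}$ whenever some $q'_i \ne \bot$, and $\bot$ otherwise; by independence across $i$ the probability of the all-$\bot$ event is at most $\gamma^r$, so $F'_{s,t}$ is a $\gamma^r$-subdistribution of $F_{s,t}$, as required.

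The core step is the congestion analysis. On the event that $i^* := \min\{i : q'_i \ne \bot\}$ exists, let $p^* := q_{i^*} = (s = v_0, v_1, \dots, v_\ell = t)$. On $B$ (which always holds under the conditioning), the triangle-style inequality for partial tree embeddings used in the proof of \Cref{lemma:d1-router-is-subflow} gives $\flow(T^G_{s,t}) \le \sum_j \flow(T^G_{v_j, v_{j+1}})$ with every $v_j \in V(T)$. Pulling out the factor $2$ for undoing the conditioning on $B$, regrouping the sum by the edge $\{a,b\}$ of $G$ appearing in $p^*$, and using the conditional independence of $(T, T^G)$ and $(q_i,q'_i)$ under the unconditional measure, one obtains, for every $\{u,v\} \in E(G)$,
\begin{align*}
\sum_{s,t}\calD_{s,t}\,\E[\flow(F'_{s,t})_{\{u,v\}}] \le 2\sum_{\{a,b\}\in E(G)} \E_T\bigl[\1{a,b\in V(T)}\,\flow(T^G_{a,b})_{\{u,v\}}\bigr] \cdot X_{\{a,b\}},
\end{align*}
where $X_{\{a,b\}} := \sum_{s,t}\calD_{s,t}\,\E[\flow(p^*)_{\{a,b\}}]$. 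A geometric-series argument, splitting on $i^* = i$ and using $\Pr[q'_1 = \dots = q'_{i-1} = \bot] \le \gamma^{i-1}$ along with the identical distribution of the $q'_i$'s, yields $\E[\flow(p^*)_{\{a,b\}}] \le \sum_{i=1}^r \gamma^{i-1}\E[\flow(q'_i)_{\{a,b\}}] \le (1-\gamma)^{-1}\E[\flow(R'_{s,t})_{\{a,b\}}]$, whence the $\gamma$-subflow guarantee of $R$ implies $X_{\{a,b\}} \le \frac{\alpha}{1-\gamma}\opt\at{h}(\calD)\cdot c_{\{a,b\}}$. Substituting this bound and applying the congestion property of $\calT$ (with $\alpha_\calT = O(\log n \log(\log n/\eps)) = O(\log n \log(rh\beta\log n))$) collapses the displayed sum to $O\bigl(\frac{\alpha}{1-\gamma}\log n \log(rh\beta\log n)\bigr)\cdot \opt\at{h}(\calD) \cdot c_{\{u,v\}}$, which is exactly the claimed congestion approximation.

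The main obstacle is orchestrating three sources of slack simultaneously without losing more than a constant in the final bound: the factor $2$ from conditioning on $B$; the triangle expansion $\flow(T^G_{s,t}) \le \sum_j \flow(T^G_{v_j,v_{j+1}})$ along $p^*$, which crucially relies on $B$ to place every $v_j$ in $V(T)$; and the geometric factor $1/(1-\gamma)$ incurred by isolating the first successful witness so that a single application of the $\gamma$-subflow hypothesis for $R$ suffices. Once these are untangled, the remainder is a direct invocation of \Cref{lemmaDOneRouter} applied to the appropriate edge-weighting.
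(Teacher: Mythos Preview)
Your proposal is correct and follows essentially the same approach as the paper's proof: the same construction of $F$, the same coupling to define $F'$ via the first non-$\bot$ witness among $r$ independent draws from $R'_{s,t}$, the same factor-$2$ removal of the conditioning on $B$, and the same telescoping of $\flow(T^G_{s,t})$ along the witness path followed by an application of the $\calD\at{1}$-router congestion bound. The only cosmetic difference is that you derive the $\frac{1}{1-\gamma}$ factor via an explicit geometric series over the index $i^*$, whereas the paper obtains it by interpreting $l'_{s,t}$ as rejection sampling from $R'_{s,t}$ conditioned on non-$\bot$; the two arguments are equivalent.
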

\begin{proof}
  We first start with a proof sketch and then show the claim formally. \medskip
  
  \rev{\textbf{Proof sketch.} Suppose $\gamma = 1/2$ and we want to construct an $n^{-O(1)}$-subflow routing. To restate the algorithm, this is accomplished by taking a $1/2$-subflow routing $F$ and sampling $r := O(\log n)$ ``cover paths'' $\{ q_i\}_{i=1}^r$ between each $s, t$. Then, we sample a $\calD\at{1}$-router $(T, \cdot) \sim \calT$ with a (bolstered) exclusion probability $\eps = ( 2 \sum_{i=1}^r \hop(q_i) )^{-1}$, i.e., such that the entire node-set of cover paths (for a fixed $s, t$) appears in $T$ with probability at least $1/2$. Now, consider such a routing from the perspective of some fixed demand $\calD$: each cover path has a (demand-dependent) failure probability $1/2$; we specify that the routing between $s, t$ fails when all cover paths between $s, t$ fail, hence the routing between some pair fails with probability at most $(1/2)^r = n^{-O(1)}$ (proving that the failure probability is boosted). Furthermore, by definition of subflow routings, the non-failed cover paths have congestion competitive with the optimal solution. Therefore, routing all of non-failed cover paths over the $\calD\at{1}$-router will also have competitive congestion with the optimal solution: we can charge the routing between the endpoints $s, t$ on $T$ to routing each edge of all non-failed cover paths on $T$ (since all intermediate nodes are present in the partial tree embedding $T$), which can then in turn be charged to the optimal solution (up to the congestion approximation of the $\calD\at{1}$-routers). This also proves the congestion claim and completes this sketch.}

  \medskip

  \textbf{Notation.} For simplicity of notation, we extend the definition of $\flow$ (which maps paths to their flows in $\mathbb{R}_{\ge 0}^{E(G)}$) to $\flow(\bot) = \vec{0}$. Due to this, we can replace $\1{p \neq \bot} \cdot \flow(p) = \flow(p)$. In the rest of the proof, let $\calT$ be a $\calD\at{1}$-router with exclusion probability $1 / (2 r h \beta)$, dilation $O( r h \beta \log^3 n)$, and congestion $O(\log n \cdot \log (r h \beta \log n) )$ (via \Cref{lemmaDOneRouter}).

  \medskip
  
  \textbf{Construction of the routing scheme $F = \{ F_{u, v} \}_{u, v \in V(G)}$.} Fix $u, v \in V(G)$. We construct $F_{u, v}$ as follows. Independently sample $r$ paths $q_{u, v, 1}, q_{u, v, 2}, \ldots, q_{u, v, r}$ from $R_{u, v}$. Let $S_{u, v} = \bigcup_{i=1}^r V(q_{u, v, i})$ be the set of nodes on the union of the $r$ sampled paths. Note that we have $|S_{u, v}| \le |\{u, v\}| + \sum_{i=1}^r \left(\hop(q_{u, v, i}) - 1\right) \le 2 + r (h \beta - 1) \le r h \beta$. Now, consider the partial tree distribution $(T, T^G)$ from $\calT$ conditioned on $S_{u, v} \subseteq V(T)$. We denote this conditional distribution as ``$\calT \mid S_{u, v} \subseteq V(T)$''. With this notation in place, we set $F_{u, v} = T^G_{u, v}$ where $(T, T^G)$ is (independently) sampled from $\calT \mid S_{u, v} \subseteq V(T)$.

        

  \medskip
  
  \textbf{Hop stretch.} Since $\calT$ has dilation $h \cdot O(r \beta \log^3 n)$ we conclude that the hop stretch of $F$ is $O(r \beta \log^3 n)$.

  \medskip
      
  \textbf{Construction of subdistributions.} Fix a demand $\calD : V(G) \times V(G) \to \mathbb{R}_{\ge 0}$. We now construct the collection of subdistributions $F' = \{ F'_{u, v} \}_{u, v}$. Fix $u, v \in V(G)$. We can reinterpret the construction of $F_{u, v}$ in the following way. The original process independently samples $r$ paths $q_{u,v,1}, \ldots, q_{u,v,r} \sim R_{{u, v}}$. We reinterpret this as sampling $r$ paths $q'_{u,v,1}, \ldots, q'_{u,v,r}$ independently from $R'_{u, v}$, where $R'_{u, v}$ is the natural $\gamma$-subdistribution of $R_{u, v}$ that depends on the demand $\calD$ (i.e., where the failures are bounded in frequency, while the successes have some total congestion). In other words, either $q'_{u,v, j} \gets q_{u,v,j}$ or $q'_{u,v, j} \gets \bot$, where the latter happens with probability $\Pr[R'_{u, v} = \bot] \le \gamma$. We define a new random variable $l'_{u, v}$ to be $q'_{u,v, j}$ where $j = \min\{ j : q'_{u,v,j} \neq \bot \}$; otherwise we define $l'_{u,v} = \bot$ if all $q_{u, v, j} = \bot$ for all $j$. If $l'_{u, v} = \bot$, then we set $F'_{u, v} \gets \bot$. Otherwise $F'_{u, v} \gets F_{u, v}$.

  \medskip
  
  \textbf{Property: $\{F'_{u, v}\}_{u, v}$ are subdistributions of $\{F_{u, v}\}_{u, v}$.} First, it is clear that for $x \neq \bot$ we have $\Pr[F'_{u, v} = x] = \Pr[F_{u, v} = x, l'_{u, v} \neq \bot] \le \Pr[F_{u, v} = x]$. Furthermore, we have that $F'_{u, v} = \bot$ only when all $r$ paths $\{q'_{u, v, j}\}_j$ are sampled as $\bot$, which happens with probability at most $\gamma^r$, therefore $\Pr[F'_{u, v} = \bot] \le \gamma^r$. We conclude that (the distribution of) $F'_{u,v}$ is a $(\gamma^r)$-subdistribution of (the distribution of) $F_{u, v}$.

  \medskip
  
  \textbf{The collection of subdistributions $\{l'_{s, t}\}_{s, t \in V(G)}$ has small congestion.} Each $l'_{s, t}$ defines a distribution over paths connecting $s$ and $t$, or $\bot$. We argue that
  \begin{align}
    \cong_G( \E\left[ \sum_{s, t} \flow(l'_{s, t}) \cdot \calD_{s, t} \right] ) \le \alpha \cdot \frac{1}{1 - \gamma} \cdot \opt\at{h}(\calD) . \label{eq:l-is-witness}
  \end{align}
  To this end, we consider the event $l'_{s, t} \neq \bot$. An equivalent process of sampling $l'_{s, t}$ is the following: sample a path from $R'_{{s, t}}$ and repeat until the path is not $\bot$. Such a rejection sampling is clearly equivalent to sampling from the conditional distribution $R'_{{s, t}} \mid R'_{{s, t}} \neq \bot$. Therefore,
  \begin{align*}
    \E[ \flow(l'_{s, t}) ] & \le (\Pr[R'_{s, t} \neq \bot])^{-1} \cdot \E[ \flow(R'_{s, t}) ] \le \frac{1}{1 - \gamma} \cdot \E[ \flow(R'_{s, t}) ] .
  \end{align*}
  Therefore, $\E[ \sum_{s, t} \flow(l'_{s, t}) \cdot \calD_{s, t} ] \le \frac{1}{1 - \gamma} \cdot \E[ \sum_{s, t} \flow(R'_{s, t}) \cdot \calD_{s, t} ]$. Finally, due to the congestion approximation property of $R'$, we have that $\cong_G(\E[ \sum_{s, t} \flow(R'_{s, t}) \cdot \calD_{s, t} ]) \le \alpha \cdot \frac{1}{1 - \gamma} \cdot \opt\at{h}(\calD)$.


  \medskip
  
  \textbf{Property: congestion of subdistributions.} We analyze $\E[ \sum_{s, t} \flow(F'_{s, t}) \cdot \calD_{s, t} ]$. First, we introduce some notation: we remind the reader that $F_{s, t}$ is drawn from (an embedding from) $\calT \mid S_{u,v} \subseteq V(T)$. On the other hand, we define $(T, T^G)$ to be an independent random variable drawn from $\calT$. Finally, we define an event $EV := \{ l'_{s, t} \neq \bot, S_{u,v} \subseteq V(T) \}$.
  \begin{align}
    \E[ \flow(F'_{s, t}) ] & = \E[ \1{l'_{s,t} \neq \bot} \cdot \flow(F_{s, t}) ]  \nonumber \\ 
                           & \le \E\left[ \1{ EV } \cdot \flow(T^G_{s, t}) \right] / \Pr[S_{u,v} \subseteq V(T)] \nonumber \\
                           & \le 2 \cdot \E\left[ \1{ EV } \cdot \flow(T^G_{s, t}) \right]  \nonumber 
  \end{align}
  The last inequality follows from $\Pr[S_{u, v} \subseteq V(T)] \ge 1/2$ which is a result of a simple union bound, $|S_{u, v}| \le r h \beta$, and the exclusion probability of $\calT$ being at most $1 / (2 r h \beta)$.
  
  For each $s, t \in V(G)$ we introduce a random flow variable $C_{s, t} \in \mathbb{R}_{\ge 0}^{E(G)}$ as follows. In the event $EV$ we consider the nodes on the path $l'_{s, t} = (s = v_0, v_1, \ldots, v_{\ell-1}, v_{\ell} = t)$ and assign
  \begin{align}
    C_{s,t} := \1{ EV } \cdot \sum_{i=0}^{\ell-1} \flow(T^G_{v_i, v_{i+1}}) \ge  \1{ EV } \cdot \flow(T^G_{s, t}) . \label{eq:cst-vs-tst}
  \end{align}
  Note that the right-hand side inequality of \Cref{eq:cst-vs-tst} holds because in any partial tree embedding $\flow(T^G_{a, c}) \le \flow(T^G_{a, b}) + \flow(T^G_{b, c})$ for all $a, b, c \in V(T)$. Therefore,
  \begin{align}
    \sum_{s, t} \E[ \flow(F'_{s, t}) \cdot \calD_{s, t} ] \le 2 \sum_{s, t} \E\left[ \1{ EV } \cdot \flow(T^G_{s, t}) \cdot \calD_{s, t} \right] \le 2 \sum_{s, t} \E[ C_{s, t} ] \cdot \calD_{s, t} . \label{eq:C-lb-2}
  \end{align}
  On the other hand, we now give an upper bound for $\sum_{s, t} \E[ C_{s, t} \cdot \calD_{s, t} ]$.
  \begin{align*}
    \sum_{s, t} & \E[ C_{s, t} \calD_{s, t} ] = \sum_{s, t} \calD_{s, t} \cdot \E\left[ \1{ EV } \cdot \sum_{i=0}^{\ell-1} \flow(T^G_{v_i, v_{i+1}}) \right] \\
    & = \sum_{s, t} \calD_{s, t} \cdot \E\left[ \1{ EV } \cdot \sum_{\{u, v\} \in E(G)} \flow(l'_{s, t})_{\{u, v\}} \cdot \flow(T^G_{u, v}) \right] \\
    & = \sum_{\{u, v\} \in E(G)} \E\left[\flow(T^G_{u, v}) \cdot \sum_{s, t} \calD_{s, t} \cdot \1{ EV } \cdot  \flow(l'_{s, t})_{\{u, v\}} \right] \\
    & \le \sum_{\{u, v\} \in E(G)} \E[\1{u, v \in V(T) } \cdot \flow(T^G_{u, v})] \cdot \sum_{s, t} \calD_{s, t} \cdot \E\left[ \flow(l'_{s, t})_{\{u, v\}} \right] \\
    & \le \sum_{\{u, v\} \in E(G)} \E[\1{ u, v \in V(T) } \cdot \flow(T^G_{u, v})] \cdot \frac{\alpha}{1 - \gamma} \cdot \opt\at{h}(\calD) \cdot c_{\set{u, v}} \quad \text{(Eq.~\ref{eq:l-is-witness})} \\
    & = \frac{\alpha}{1 - \gamma} \cdot \opt\at{h}(\calD) \cdot \sum_{\{u, v\} \in E(G)} c_{\set{u, v}} \cdot \E[ \1{ u, v \in V(T) } \cdot \flow(T^G_{u, v})] \\    
    & = \frac{\alpha}{1 - \gamma} \cdot \opt\at{h}(\calD) \cdot O(\log n \cdot \log(rh\beta \log n)) \cdot \vec{c_G} \qquad \text{($\calD\at{1}$-router properties)}
  \end{align*}
  Combining the above with \Cref{eq:C-lb-2} we get that $\cong_G(\sum_{s, t} \E[ \flow(F'_{s, t}) \cdot \calD_{s, t} ]) \le O(\frac{\alpha}{1 - \gamma} \cdot \log n \cdot \log(r h \beta \log n)) \cdot \opt\at{h}(\calD)$.
\end{proof}

We combine the results that we developed so far.

\begin{corollary}\label{corollary:good-subflow-distrib}
  For every (complete) capacitated graph $G = (V, E, c)$ \rev{with poly\-nomially-bounded capacities} and every $h \ge 1, r = O(1)$, there exists an $h$-hop $(n^{-r})$-subflow routing for $G$ with congestion approximation $O(\log^2 n \cdot \log^2 (h \log n))$ and hop stretch $O(\log^7 n)$.
\end{corollary}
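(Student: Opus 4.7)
The plan is to chain together the three results developed so far: \Cref{lemmaDOneRouter}, \Cref{lemma:d1-router-is-subflow}, and \Cref{lemma:gamma-correcting}. Concretely, I would first invoke \Cref{lemmaDOneRouter} on the complete capacitated graph $G$ with exclusion parameter $\eps_1 := 1/(4h)$ to obtain a $\calD\at{1}$-router with dilation $D_1 = O(h \log^3 n)$, exclusion probability at most $1/(4h)$, and congestion $\alpha_1 = O(\log n \cdot \log(h \log n))$. This exclusion probability is precisely what is required to feed into \Cref{lemma:d1-router-is-subflow}, yielding an $h$-hop $(1/2)$-subflow routing $R$ with hop stretch $\beta_1 = D_1/h = O(\log^3 n)$ and congestion approximation $\alpha_1$.

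Next, I would apply \Cref{lemma:gamma-correcting} to $R$ with $\gamma = 1/2$ and $r := \lceil C \log n \rceil$ for a sufficiently large constant $C$ (depending on the constant $r$ in the corollary statement), so that $\gamma^r = (1/2)^r \le n^{-r}$. The lemma produces an $h$-hop $(n^{-r})$-subflow routing $F$ whose parameters I can bound directly. For the hop stretch, the lemma yields $O(r \beta_1 \log^3 n) = O(\log n \cdot \log^3 n \cdot \log^3 n) = O(\log^7 n)$, matching the corollary's claim. For the congestion approximation, the lemma yields
\begin{equation*}
  O\!\left( \tfrac{\alpha_1}{1 - \gamma} \cdot \log n \cdot \log(r h \beta_1 \log n) \right) = O\!\left(\alpha_1 \log n \cdot \log(h \log^5 n)\right) = O\!\left(\log^2 n \cdot \log^2(h \log n)\right),
\end{equation*}
using $rh\beta_1 \log n = O(h \log^5 n)$ and $\log(h \log^5 n) = O(\log(h \log n))$, and absorbing the $\alpha_1 = O(\log n \cdot \log(h\log n))$ factor.

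There is no real obstacle in this proof beyond bookkeeping: each of the three building blocks is already stated in the exact form needed to be composed, the parameter dependencies line up cleanly, and the only delicate point is choosing $\eps_1$ small enough so that \Cref{lemma:d1-router-is-subflow} applies and $r$ large enough so that $(1/2)^r \le n^{-r_{\text{target}}}$ (which is achieved by $r = O(\log n)$ since the corollary's $r$ is a fixed constant). The complete-graph assumption of the corollary matches the assumption of \Cref{lemmaDOneRouter}, so no intermediate completion step is needed here; that step is deferred to \Cref{sec:putting-it-together}.
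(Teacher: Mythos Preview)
Your proposal is correct and follows essentially the same approach as the paper: invoke \Cref{lemmaDOneRouter} with $\eps_1=1/(4h)$, convert to a $\tfrac12$-subflow routing via \Cref{lemma:d1-router-is-subflow}, and then boost via \Cref{lemma:gamma-correcting} with $r=O(\log n)$. The parameter bookkeeping matches the paper's almost verbatim.
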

\begin{proof}
  There exists a $\calD\at{1}$-router $\calT_1$ on $G$ with dilation $O(h \log^3 n)$, exclusion probability $1 / (4h)$, and congestion $O(\log n \log ( h \log n ) )$ (via \Cref{lemmaDOneRouter}).

  Define $R := \{ R_{s, t} \}_{s, t \in V(G)}$ with $R_{s, t} := T^G_{s, t}$, where $(T, T^G) \sim \calT_1$. Applying \Cref{lemma:d1-router-is-subflow}), we conclude that $R$ is an $h$-hop $\frac{1}{2}$-subflow routing with congestion $O(\log n \log ( h \log n ) )$ and hop stretch $O(\log^3 n)$.

  Finally, correcting the failures in the subflow routing via \Cref{lemma:gamma-correcting} by setting $r_{(\Cref{lemma:gamma-correcting})} := r \log_2 n = O(\log n)$, we construct an $h$-hop $(n^{-r})$-subflow routing with congestion approximation $O(\log^2 n \cdot \allowbreak\log^2 (h \log n))$, hop stretch $O(\log^7 n)$.
\end{proof}

\subsection{Putting it together: Non-subflow routing on general graphs}\label{sec:putting-it-together}

In this section, we prove our main result by combining all of the above. On a high-level, the main technical contribution of this section is to (1) completely eliminate failures, and to (2) extend the results from complete capacitated graphs to general capacitated graphs. However, both of these issues can be resolved in the following way.
\begin{enumerate}
\item A $n^{-C}$-fraction of failures (for any constant $C > 0$) can readily be ignored since they contribute an insignificant amount to the congestion.
\item We can ``complete'' a general capacitated graph into its completed counterpart by converting ``non-edges'' to edges of sufficiently small capacity $n^{-O(1)}$. We construct the hop-constrained oblivious routing $R$ on the completed graph. Note that $R$ is supported (with small probability) over non-edges of the original graph. However, we can easily condition on these paths not using non-edges, which will only insignificantly increase the congestion.
\end{enumerate}

\thmGeneralRouting*

\begin{proof}
  For simplicity of notation, we extend the definition of $\flow$ (which maps paths to their flows in $\mathbb{R}_{\ge 0}^{E(G)}$) to $\flow(\bot) = \vec{0}$. Due to this, we can replace $\1{p \neq \bot} \cdot \flow(p) = \flow(p)$. Furthermore, let $c_{\min} = \min_{e \in E(G)} c_G(e)$, $c_{\max} = \max_{e \in E(G)} e_G(e)$, and let $C = O(1)$ be a sufficiently large constant. We remind the reader that because the capacities are polynomially bounded we have $c_{\min} \ge n^{- O(1)}$ and $c_{\max} \le n^{O(1)}$.

  \medskip
  
  \textbf{Completing the graph.} We first construct a ``completed'' capacitated graph $H = (V(G),\allowbreak \binom{V(G)}{2},\allowbreak c_{H})$ where $c_{H}(e) := c_G(e)$ if $e \in E(G)$, or $c_{H}(e) := c_{\min} \cdot n^{-C}$ if $e \in E(H) \setminus E(G)$. Due to this choice, for any demand $\calD$, we have that $\opt_{H}\at{h}(\calD) \ge \frac{1}{n^2 c_{\max}} \cdot \sum_{s, t} \calD_{s, t}$ since we are pushing $\sum_{s, t} \calD_{s, t}$ units of flow across at most $n^2$ edges of capacity of at most $c_{\max}$. Therefore, we conclude that $\opt_{H}\at{h}(\calD) \ge n^{-O(1)} \cdot \sum_{s, t} \calD_{s, t}$.

  For the rest of the proof let $R = \{ R_{s, t} \}_{s, t \in V(G)}$ be an $h$-hop $(n^{-2C})$-subflow routing on $H$ with congestion approximation $\alpha := O(\log^2 n \cdot \log^2 (h \log n))$ and hop stretch $O(\log^7 n)$. Furthermore, we can assume that the support of $R_{s,t}$ is over simple paths since we can always simplify each path without increasing the congestion.

  \medskip
    
  \textbf{$R$ has good congestion approximation on $H$ without subflows.} Fix a demand $\calD$ and let $R' = \{ R'_{s, t} \}_{s, t \in V(G)}$ be the collection of $(n^{-2C})$-subdistributions of $R = \{ R_{s, t} \}_{s, t}$ with respect to $\calD$. We also denote by $R_{s,t}$ and $R'_{s, t}$ the random variables (drawn from the distribution of the same name), coupled so that $R'_{s,t} \in \{ R_{s,t}, \bot \}$. With this notation, we now show that the routing scheme $R$ achieves a good congestion approximation on (all) $\calD$.
  \begin{align}
    & \cong_{H}(\sum_{s, t} \E[ \flow_H(R_{s, t}) \cdot \calD_{s, t} ]) \nonumber \\     
    & \le \cong_{H}\left(\sum_{s, t} \E[ \flow_H(R'_{s, t}) \calD_{s, t} ] \right) + \cong_{H}\left( \sum_{s, t} \E[ \1{R'_{s, t} = \bot} \flow_H(R_{s, t}) \calD_{s, t} ] \right ) \nonumber \\
    & \le \alpha \cdot \opt\at{h}_{H}(\calD) + \cong_{H}\left( \sum_{s, t} \E[ \1{R'_{s, t} = \bot} \flow_H(R_{s, t}) \cdot \calD_{s, t} ] \right ) \nonumber \\
    & \le \alpha \cdot \opt\at{h}_{H}(\calD) + \Pr[ R'_{s, t} = \bot ] \cdot (n^C / c_{\min}) \cdot \sum_{s, t} \calD_{s, t} \label{eq:bad-edges-contrib}  \\
    & \le \alpha \cdot \opt\at{h}_{H}(\calD) + n^{-2C} \cdot \frac{n^C}{n^{O(1)}} \cdot n^{O(1)} \cdot \sum_{s, t} \calD_{s, t} \nonumber  \\
    & \le \alpha \cdot \opt\at{h}_{H}(\calD) + \opt\at{h}_{H}(\calD) \label{eq:bad-stuff-at-most-opt}  \\
    & \le 2 \alpha \cdot \opt\at{h}_{H}(\calD) \nonumber
  \end{align}
  \Cref{eq:bad-edges-contrib} holds because each ``bad'' path $R_{s, t}$ (i.e., $R_{s,t}$ when $R'_{s,t} = \bot$) incurs at most $(\min_{e \in E(H)}\allowbreak c_H(e))^{-1} \le (c_{\min} \cdot n^{-C})^{-1}$ congestion per each unit of demand (remember that we can assume paths are simple). \Cref{eq:bad-stuff-at-most-opt} holds for sufficiently large $C$ because $n^{-O(1)} \cdot \sum_{s, t} \calD_{s, t} \le \opt_{H}\at{h}(\calD)$ (as argued before). With this calculation, we conclude that $R$ is an $h$-hop oblivious routing for $H$ with hop stretch $O(\log^7 n)$ and congestion approximation $O(\alpha) = O(\log^2 n \cdot \log^2 (h \log n))$.

  \medskip

  \textbf{Routing scheme for the general graph.} We now adapt the routing scheme $R = \{ R_{s, t} \}_{s, t \in V(G)}$, which is defined on the completed graph $H$, to a new routing scheme $F = \{ F_{s, t} \}_{s, t \in V(G)}$ which is valid on the original (general) graph $G$. We simply define $F_{s, t}$ be $R_{s, t}$ conditioned on the sampled path traversing only edges in $E(G)$. We denote this conditional distribution with $F_{s,t} = R_{s, t} \mid R_{s, t} \subseteq G$.

  Fix $s, t \in V(G)$ such that there exists an $h$-hop path in $G$ between $s$ and $t$. Our aim is to bound the probability that $R_{s, t}$ uses edges not in $E(G)$. Let $\calD\at{s, t}$ be the demand that has a single request between $s$ and $t$; i.e., $\calD\at{s, t}_{u, v} := \1{(u,v) = (s,t)}$. Since there is an $h$-hop path between $s$ and $t$ with edges of capacities at least $c_{\min}$ we have $\opt_G\at{h}(\calD\at{s, t}) \le \frac{1}{c_{\min}} \le n^{O(1)}$. Fix an edge $e \in E(H)\setminus E(G)$, i.e., that does not exist in $G$. By the assumption that $R$ is an $h$-hop oblivious routing for $H$ with congestion stretch $O(\alpha) \le n^{O(1)}$, its routing of $\calD\at{s, t}$ on $H$ is $n^{O(1)}$-competitive, therefore:
  \begin{align*}
    \Pr[e \in R_{s, t}] \le \E[ \flow(R_{s, t}, e) ] \le n^{O(1)} \cdot \opt_{H}\at{h}(\calD) \cdot c_{H}(e) \le n^{O(1) - C}.
  \end{align*}
  Union-bounding over all $E(H) \setminus E(G)$ we have that (for a sufficiently large $C$)
  \begin{align*}
    \Pr[R_{s,t} \not\subseteq G] \le |E(H) \setminus E(G)| \Pr[e \in R_{s, t}] \le n^2 \cdot n^{O(1) - C} \le 1/2 .
  \end{align*}

  Fix an arbitrary demand $\calD$. If there exists $s, t \in V(G)$ such that $\calD_{s, t} > 0$, but there is no $h$-hop path between them, then $\opt_G\at{h}(\calD) = \infty$ and the claim is trivial. If this is not the case, we bound the congestion of the routing scheme $F_{s,t}$ (which is conditioned on going only over edges in $G$):
  \begin{align}
    & \cong_G\left(\sum_{s, t} \E[ \flow_G( F_{s, t} ) \cdot \calD_{s, t} ]\right) \nonumber \\
    & \le \cong_G\left(\sum_{s, t} \E[ \1{R_{s, t} \subseteq G} \cdot \flow_G(R_{s, t}) \cdot \calD_{s, t} ] \right) / \Pr[R_{s, t} \subseteq G] \nonumber \\
    & \le 2 \cdot \cong_H\left( \sum_{s, t} \E[ \flow_H(R_{s,t}) \cdot \calD_{s, t} ] \right ) \label{eq:G-only} \\
    & \le O(\alpha) \cdot \opt_H\at{h}(\calD) \nonumber \\
    & \le O(\alpha) \cdot \opt_G\at{h}(\calD) \label{eq:GH-final} .
  \end{align}
  \Cref{eq:G-only} follows because all paths in the preceding equation go only over edges in $G$; both $G$ and $H$ agree on the capacities of such edges. \Cref{eq:GH-final} follows from $c_H(e) \ge c_G(e)$, hence $\opt_H\at{h}(\calD) \le \opt_G\at{h}(\calD)$. We conclude that $F$ is an $h$-hop oblivious routing on $G$ with hop stretch $O(\log^7 n)$ and congestion approximation $O(\log^2 n \cdot \log^2 (h \log n))$ for all demands $\calD$.
 \end{proof}

\section{Computational Aspects}

This paper primarily focuses on the existence of hop-constrained oblivious routings without talking about how to efficiently construct them. However, it is relatively straightforward to give a randomized construction of such routings in polynomial time. Examining \Cref{alg:sampling-hop-constrained-routings} that constructs these routings, we observe that all of the steps involved are straightforward to implement in polynomial time except, perhaps, constructing $\calD\at{1}$-routers. Moreover, the existence of $\calD_1$-routers is proven via strong duality, making it less clear how to make the result algorithmic. However, this step can be made algorithmic---we can sample $\calD_1$-routers in polynomial time using the standard technique of multiplicative weights~\cite{arora2012multiplicative} in the same way the R\"acke oblivious routing constructions are algorithmic~\cite{racke2008optimal}.

\rev{Furthermore, very recent work by Haeupler, R\"acke, and Ghaffari~\cite{hopexpander2022} has shown that $h$-hop oblivious routing distributions can be constructed in $\poly(h) \cdot m^{1+o(1)}$.}



\appendix
\section{Proof of \Cref{lemma:d1-to-all-trees}}\label{sec:proof-lemma-d1-to-all-trees}

\lemmaDToAllTrees*
\begin{proof}[Proof of \Cref{lemma:d1-to-all-trees}] 
  We prove the claim in this paper for completeness. However, we note that the claim is implicit in, e.g., \cite{racke2008optimal}. Claim 3 in Section 2 of \cite{racke2008optimal} gives a definition of expected relative load $\alpha$ of a distribution over complete tree embeddings. The definition is equivalent to saying that for each edge $e$ the expected amount of flow routed over $e$ when routing the $\calD\at{1}$ demand is at most $\alpha \cdot c_e$. The Subsection titled ``Oblivious Routing'' of Section 3 proves that routing any set of demands $\calD$ over a distribution with expected relative load $\alpha$ implies that the achieved routing has congestion approximation at most $\alpha$, as required.


  \rev{We now prove the claim. First, let $I(s, t)$ be the unit demand between $s, t \in V$, i.e., $I({s, t})_{x, y} = \1{ \{s, t\} = \{x, y\} }$. For any \emph{tree-based} routing scheme $R^\calT$, and any path $p = (s = p_0, p_1, \ldots, p_{\hop(p)} = t)$ we observe that
  $$\flow(I(s, t), R^\calT) \le \sum_{i=0}^{\hop(p) - 1} \flow(I(p_i, p_{i+1}), R^\calT) .$$
  Note that such claims do not hold for general routing schemes. Consequently, for any distribution $\calP$ over paths between $s$ and $t$ we have that
  $$\flow(I(s, t), R^\calT) \le \E_{p \sim \calP}\left[ \sum_{i=0}^{\hop(p) - 1} \flow(I(p_i, p_{i+1}), R^{\calT}) \right] .$$

  Given an arbitrary demand $\calD$, suppose the optimal routing scheme $R(\calD)$ achieves the optimal value $\opt\at{\infty}(\calD)$. In other words, for each edge $\{u, v\} \in E$ we have that
  $$\sum_{s, t \in V} \calD_{s, t}\ \E_{p \sim R(\calD)_{s, t}} \sum_{i=0}^{\hop(p) - 1} \1{\{u, v\} = \{p_i, p_{i+1}\}} \le \opt\at{\infty}(\calD) \cdot c_{\{u, v\}}.$$ We now have:
  \begin{align*}
    \flow(\calD, R^\calT) & = \sum_{s, t \in V} \calD_{s, t}\ \flow( I(s, t), R^{\calT} ) \\
                          & = \sum_{s, t \in V} \calD_{s, t}\ \E_{p \sim R(\calD)_{s, t}}\ \flow( \sum_{i=0}^{\hop(p) - 1} I(p_i, p_{i+1}), R^{\calT} ) \allowdisplaybreaks \\
                          & = \sum_{\{u, v\} \in E} \flow( I(u, v), R^{\calT} )\ \cdot \\
                          & \qquad \cdot \left( \sum_{s, t \in V} \calD_{s, t}\ \E_{p \sim R(\calD)_{s, t}} \sum_{i=0}^{\hop(p) - 1} \1{\{u, v\} = \{p_i, p_{i+1}\}} \right ) \allowdisplaybreaks \\
                          & \le \sum_{\{u, v\} \in E} \flow( I(u, v), R^{\calT} )\ \cdot \opt\at{\infty}(\calD) \cdot c_{\{u, v\}}  \\
                          & = \flow( \calD\at{1}, R^{\calT} )\ \cdot \opt\at{\infty}(\calD) \\
                          & \le \alpha \cdot \vec{c} \cdot \opt\at{\infty}(\calD).
  \end{align*}
  In other words, $\cong_G(\calD, R^{\calT}) \le \alpha \cdot \opt\at{\infty}(\calD)$, as required.} 
\end{proof}

\bibliographystyle{alpha}
\bibliography{refs}

\newcommand{\etalchar}[1]{$^{#1}$}
\begin{thebibliography}{DSHK{\etalchar{+}}12}

\bibitem[ABD{\etalchar{+}}06]{aspnes2006eight}
James Aspnes, Costas Busch, Shlomi Dolev, Panagiota Fatourou, Chryssis
  Georgiou, Alexander~A Shvartsman, Paul~G Spirakis, and Roger Wattenhofer.
\newblock Eight open problems in distributed computing.
\newblock {\em Bulletin of the EATCS}, 90:109--126, 2006.

\bibitem[ACF{\etalchar{+}}04]{azar2004optimal}
Yossi Azar, Edith Cohen, Amos Fiat, Haim Kaplan, and Harald R{\"a}cke.
\newblock Optimal oblivious routing in polynomial time.
\newblock {\em Journal of Computer and System Sciences}, 69(3):383--394, 2004.

\bibitem[AHK12]{arora2012multiplicative}
Sanjeev Arora, Elad Hazan, and Satyen Kale.
\newblock The multiplicative weights update method: a meta-algorithm and
  applications.
\newblock {\em Theory of Computing}, 8(1):121--164, 2012.

\bibitem[Bar96]{bartal1996probabilistic}
Yair Bartal.
\newblock Probabilistic approximation of metric spaces and its algorithmic
  applications.
\newblock In {\em Proceedings of 37th Conference on Foundations of Computer
  Science}, pages 184--193. IEEE, 1996.

\bibitem[Bar98]{bartal1998approximating}
Yair Bartal.
\newblock On approximating arbitrary metrices by tree metrics.
\newblock In {\em Proceedings of the thirtieth annual ACM symposium on Theory
  of computing}, pages 161--168, 1998.

\bibitem[BKR03]{bienkowski2003practical}
Marcin Bienkowski, Miroslaw Korzeniowski, and Harald R{\"a}cke.
\newblock A practical algorithm for constructing oblivious routing schemes.
\newblock In {\em Proceedings of the fifteenth annual ACM symposium on Parallel
  algorithms and architectures}, pages 24--33, 2003.

\bibitem[BMI10]{busch2010optimal}
Costas Busch and Malik Magdon-Ismail.
\newblock Optimal oblivious routing in hole-free networks.
\newblock In {\em International Conference on Heterogeneous Networking for
  Quality, Reliability, Security and Robustness}, pages 421--437. Springer,
  2010.

\bibitem[BMIX05]{busch2005oblivious}
Costas Busch, Malik Magdon-Ismail, and Jing Xi.
\newblock Oblivious routing on geometric networks.
\newblock In {\em Proceedings of the seventeenth annual ACM symposium on
  Parallelism in algorithms and architectures}, pages 316--324, 2005.

\bibitem[BMIX08]{busch2008optimal}
Costas Busch, Malik Magdon-Ismail, and Jing Xi.
\newblock Optimal oblivious path selection on the mesh.
\newblock {\em IEEE Transactions on Computers}, 57(5):660--671, 2008.

\bibitem[DSHK{\etalchar{+}}12]{dassarma2012distributed}
Atish Das~Sarma, Stephan Holzer, Liah Kor, Amos Korman, Danupon Nanongkai,
  Gopal Pandurangan, David Peleg, and Roger Wattenhofer.
\newblock Distributed verification and hardness of distributed approximation.
\newblock {\em SIAM Journal on Computing (SICOMP)}, 41(5):1235--1265, 2012.

\bibitem[ER09]{englert2009oblivious}
Matthias Englert and Harald R{\"a}cke.
\newblock Oblivious routing for the lp-norm.
\newblock In {\em 2009 50th Annual IEEE Symposium on Foundations of Computer
  Science}, pages 32--40. IEEE, 2009.

\bibitem[FRT03]{fakcharoenphol2003tight}
Jittat Fakcharoenphol, Satish Rao, and Kunal Talwar.
\newblock A tight bound on approximating arbitrary metrics by tree metrics.
\newblock In {\em Proceedings of the thirty-fifth annual ACM symposium on
  Theory of computing}, pages 448--455, 2003.

\bibitem[GHR06]{gupta2006oblivious}
Anupam Gupta, Mohammad~T Hajiaghayi, and Harald R{\"a}cke.
\newblock Oblivious network design.
\newblock In {\em Proceedings of the seventeenth annual ACM-SIAM symposium on
  Discrete algorithm}, pages 970--979, 2006.

\bibitem[GK11]{gupta2011approximation}
Anupam Gupta and Jochen K{\"o}nemann.
\newblock Approximation algorithms for network design: A survey.
\newblock {\em Surveys in Operations Research and Management Science},
  16(1):3--20, 2011.

\bibitem[HHR03]{harrelson2003polynomial}
Chris Harrelson, Kirsten Hildrum, and Satish Rao.
\newblock A polynomial-time tree decomposition to minimize congestion.
\newblock In {\em Proceedings of the fifteenth annual ACM symposium on Parallel
  algorithms and architectures}, pages 34--43, 2003.

\bibitem[HHZ20]{haeupler2020tree}
Bernhard Haeupler, D~Ellis Hershkowitz, and Goran Zuzic.
\newblock Tree embeddings for hop-constrained network design.
\newblock {\em arXiv preprint}, 2020.

\bibitem[HRG22]{hopexpander2022}
Bernhard Haeupler, Harald R\"acke, and Mohsen Ghaffari.
\newblock Hop-constrained expander decompositions, oblivious routing, and
  distributed universal optimality.
\newblock 2022.

\bibitem[LMR94]{leighton1994packet}
Frank~Thomson Leighton, Bruce~M Maggs, and Satish~B Rao.
\newblock Packet routing and job-shop scheduling in ${O}$(congestion+ dilation)
  steps.
\newblock {\em Combinatorica}, 14(2):167--186, 1994.

\bibitem[Rab89]{rabin1989efficient}
Michael~O Rabin.
\newblock Efficient dispersal of information for security, load balancing, and
  fault tolerance.
\newblock {\em Journal of the ACM (JACM)}, 36(2):335--348, 1989.

\bibitem[Rac02]{racke2002minimizing}
Harald Racke.
\newblock Minimizing congestion in general networks.
\newblock In {\em The 43rd Annual IEEE Symposium on Foundations of Computer
  Science, 2002. Proceedings.}, pages 43--52. IEEE, 2002.

\bibitem[R{\"a}c08]{racke2008optimal}
Harald R{\"a}cke.
\newblock Optimal hierarchical decompositions for congestion minimization in
  networks.
\newblock In {\em Proceedings of the fortieth annual ACM symposium on Theory of
  computing}, pages 255--264, 2008.

\bibitem[R{\"a}c09]{racke2009survey}
Harald R{\"a}cke.
\newblock Survey on oblivious routing strategies.
\newblock In {\em Conference on Computability in Europe}, pages 419--429.
  Springer, 2009.

\bibitem[Sch06]{scheideler2006universal}
Christian Scheideler.
\newblock {\em Universal routing strategies for interconnection networks},
  volume 1390.
\newblock Springer, 2006.

\bibitem[ST01]{srinivasan2001constant}
Aravind Srinivasan and Chung-Piaw Teo.
\newblock A constant-factor approximation algorithm for packet routing and
  balancing local vs. global criteria.
\newblock {\em SIAM Journal on Computing}, 30(6):2051--2068, 2001.

\bibitem[Upf84]{upfal1984efficient}
Eli Upfal.
\newblock Efficient schemes for parallel communication.
\newblock {\em Journal of the ACM (JACM)}, 31(3):507--517, 1984.

\bibitem[VB81]{valiant1981universal}
Leslie~G Valiant and Gordon~J Brebner.
\newblock Universal schemes for parallel communication.
\newblock In {\em Proceedings of the thirteenth annual ACM symposium on Theory
  of computing}, pages 263--277, 1981.

\bibitem[WS11]{williamson2011design}
David~P Williamson and David~B Shmoys.
\newblock {\em The design of approximation algorithms}.
\newblock Cambridge university press, 2011.

\bibitem[Zuz20]{zuzic2020phd}
Goran Zuzic.
\newblock {\em Towards Universal Optimality in Distributed Optimization}.
\newblock PhD thesis, Carnegie Mellon University, 2020.

\end{thebibliography}

\end{document}